\newtheorem{theorem}{Theorem}
\newtheorem{proposition}[theorem]{Proposition}
\newtheorem{lemma}[theorem]{Lemma}
\theoremstyle{definition}
\newtheorem{definition}[theorem]{Definition}
\theoremstyle{remark}
\newtheorem{remark}[theorem]{Remark}
\renewcommand{\epsilon}{\varepsilon}
\crefname{thm}{theorem}{theorems}
\crefname{lem}{lemma}{lemmas}
\renewcommand\textphisymbol\textphi
\title{A Cartesian Bicategory of Polynomial Functors in Homotopy Type Theory}
\author{
  Eric Finster
  \institute{University of Cambridge}
  \and
  Samuel Mimram
  \institute{École polytechnique}
  \and
  Maxime Lucas
  \institute{Université Sorbonne Paris Nord}
  \and
  Thomas Seiller
  \institute{CNRS\thanks{Supported by the Île-de-France region through the DIM RFSI project CoHOp.}}
}
\begin{document}
\maketitle
\begin{abstract}
  Polynomial functors are a categorical generalization of the usual notion of
polynomial, which has found many applications in higher categories and type
theory: those are generated by polynomials consisting a set of monomials built
from sets of variables. They can be organized into a cartesian bicategory, which
unfortunately fails to be closed for essentially two reasons, which we address
here by suitably modifying the model. Firstly, a naive closure is too large to
be well-defined, which can be overcome by restricting to polynomials which are
finitary. Secondly, the resulting putative closure fails to properly take the
2-categorical structure in account. We advocate here that this can be addressed
by considering polynomials in groupoids, instead of sets. For those, the
constructions involved into composition have to be performed up to homotopy,
which is conveniently handled in the setting of homotopy type theory: we use it
here to formally perform the constructions required to build our cartesian
bicategory, in Agda. Notably, this requires us introducing an axiomatization in
a small universe of the type of finite types, as an appropriate higher inductive
type of natural numbers and bijections.

\end{abstract}

% \begin{keyword}
  % polynomial functor, groupoid, cartesian closed bicategory, homotopy type theory  
% \end{keyword}

% \author{\IEEEauthorblockN{Eric Finster}
% \IEEEauthorblockA{University of Cambridge\\
% School of Computer Science\\
% Email: \href{mailto:ericfinster@gmail.com}{\texttt{ericfinster@gmail.com}}}
% \\
% \IEEEauthorblockN{Samuel Mimram}
% \IEEEauthorblockA{École Polytechnique\\
% Laboratoire d'informatique (LIX)\\
% Email: \href{mailto:samuel.mimram@lix.polytechnique.fr}{\texttt{samuel.mimram@lix.polytechnique.fr}}}
% \and
% \IEEEauthorblockN{Maxime Lucas}
% \IEEEauthorblockA{Université Sorbonne Paris Nord\\
% LIPN -- UMR 7030 CNRS\\
% Email: \href{mailto:lucas@lipn.fr}{\texttt{lucas@lipn.fr}}}
% \\
% \IEEEauthorblockN{Thomas Seiller}
% \IEEEauthorblockA{CNRS\\
% LIPN -- UMR 7030 Université Sorbonne Paris Nord\\
% Email: \href{mailto:seiller@lipn.fr}{\texttt{seiller@lipn.fr}}}
% }

% \tableofcontents

\section{Introduction}
Polynomial functors have been introduced as a categorical generalization of
traditional polynomials and have been intensively studied by Kock and
collaborators~\cite{gambino2009polynomial,kock2012data,gepner2017infty}. They
have become an important categorical tool, allowing the definition and
manipulation of various structures such as opetopes~\cite{kock2010polynomial} or
type theories~\cite{arkor2020algebraic}. This motivates the study of the
categorical structures they bear in order to facilitate constructions on those,
and the situation turns out to be quite subtle. For instance, one would expect
that they should be cartesian closed (after all, most usual categories are), but
it is not the case: the cartesian product functor does not have a satisfactory
right adjoint. There are essentially two reasons for that.

The first one is that there are size issues: the naive definition of the
exponential appears to be too large to be a proper object in the category of
polynomials.  This is easily overcome by restricting to polynomials which are
finitary, \ie only involve monomials consisting of products of variables which
are finite. The resulting category is isomorphic to the category of Girard's
normal functors~\cite{girard1988normal}, which is a model of simply typed
$\lambda$\nbd-calculus, and linear logic~\cite{hasegawa2002two}, which can be
thought of as a quantitative variant of the relational model (historically, this
model is in fact one of the starting points motivating the introduction of
linear logic).

The second one is more problematic: polynomial functors carry an intuitive
2-categorical structure, but it was observed early on that the closure mentioned
above fails to extend to a 2\nbd-categorical
one~\cite{girard1988normal,taylor1989quantitative,hasegawa2002two}. Here, we
advocate that a satisfactory answer to this problem is provided by switching
from traditional polynomial functors to ones over groupoids, as first considered
by Kock~\cite{kock2012data}, see
also~\cite{weber2015polynomials,vidmar2018polynomial}. We show that the
resulting bicategory is cartesian closed; it is more generally a model of
intuitionistic linear logic, which we expect to extend as a model of
differential linear logic. The resulting category is close to the ``equivariant
variant'' of polynomials, provided by generalized species or analytic
functors~\cite{fiore2004generalised,fiore2007differential,fiore2008cartesian,fiore2014analytic}.

In order to assist us in our proofs and help us gain confidence in those, we
have formalized most of them, from the beginning, in Agda, in the setting of
univalent type theory~\cite{hottbook} (with a custom implementation of homotopy
type theory). Since this reflects the way we worked, allows easily manipulating
objects such as groupoids as 1-truncated types, and ensures for free the
functoriality and homotopy invariance of all constructions, we decided to
present our results directly in the type-theoretic formalism, and a
``translation'' into the traditional set-theoretic setting for polynomial
functors is planned for future works. Moreover, the present work required us to
develop specific type-theoretic constructions, which we think could find
application outside the scope of this work. A salient contribution is the
construction, as a higher inductive type, of the type of finite sets and
bijections in a small universe. The Agda code is publicly available~\cite{git}.

% In order to work more easily with polynomial functors over groupoids, we
% introduce here a fibered variant of the definition introduced
% in~\cite{kock2012data}, which is easier to manipulate in practice. This
% definition allows us to carry on explicit computations, and construct a
% structure of cartesian closed category on those. It is also adapted to
% mechanized proofs: we have formalized the main constructions required to have a
% formalization of the constructions performed here, in Agda. In order to do so,
% we work in the setting of univalent type theory, where 1-truncated types are
% equipped with a structure of groupoid, whose morphisms are identities, and all
% morphisms are functorial. A salient contribution is the construction, as a
% higher inductive type, of the type of finite sets and bijections in a small
% universe, which is required in order to define the exponential in the category
% of polynomials.

We recall the traditional definition of polynomials, polynomial functors and
associated constructions in \cref{sec:poly}, we then present a formalization of
the bicategory of polynomial functors in groupoids in \cref{sec:bicategory} and
discuss the cartesian structure and the failure of being properly closed in
\cref{sec:naive}. We then introduce and study the notion of finite type in
\cref{sec:fintype} and finally properly construct a cartesian closed bicategory
in \cref{sec:ccc}.

% TODO: thanks Zeinab

\section{Polynomial functors}
\label{sec:poly}
We begin by recalling the traditional definition of polynomial functors, as well
as related constructions. All the material in this section is already known, but
required in the following.

\subsection{The category of polynomial functors}
We briefly recall here the categorical generalization of the notion of
polynomial provided by polynomial functors, and refer the reader
to~\cite{gambino2009polynomial,kock2009notes} for a detailed
presentation. Polynomials as traditionally defined as finite sums of the form
$P(X)=\sum_{0\leq i<k}X^{n_i}$. This notion can be ``categorified'' by taking a
set $B$ of monomials (instead of specifying their number $k$) and having a set
$E_b$ of instances of $X$ in each monomial~$b$ (instead of specifying their
number~$n_i$). This data can thus collected as a function $P:E\to B$
representing the \emph{polynomial},
% \begin{equation}
  % \label{eq:untyped-poly}
  % \begin{tikzcd}
    % E\ar[r,"p"]&B
  % \end{tikzcd}
% \end{equation}
where $B$ is the set of monomials and for each monomial $b$, $E_b=P^{-1}(b)$ is
the set of instances of~$X$ involved in the monomial. Such a function~$P$,
induces a functor~$\intp{P}:\Set\to\Set$ defined, by mimicking the usual
definition of polynomials, as
\[
  \intp{P}(X)=\sum_{b\in B}X^{E_b}
\]
and we call \emph{polynomial functor} such a functor. An interesting point of
view on the above data consists in considering the elements of~$B$ as abstract
\emph{operations}, whose \emph{parameters} are the elements of~$E_b$, so that
$\intp{P}(X)$ corresponds to the set obtained by formally applying the
operations in~$B$ to the required number of elements of the set~$X$.

We will more generally consider the ``typed'' or ``colored'' variant of polynomials and
polynomial functors, where the parameters of an operation are decorated by a
``color'' in a set~$I$, as well as their output in a set~$J$. This data can be
encoded by a diagram~$P$ in~$\Set$ of the form
\begin{equation}
  \label{eq:poly}
  \begin{tikzcd}
    I&\ar[l,"s"']E\ar[r,"p"]&B\ar[r,"t"]&J
  \end{tikzcd}
\end{equation}
which we call a \emph{polynomial} and consists of an uncolored polynomial~$p$
together with functions $s$ and $t$ respectively indicating the colors of the
parameters and outputs of operations.  Note that the previous uncolored setting is
recovered when $I$ and $J$ are both the terminal set. Writing $\Set/I$ for the
slice category of sets over a set~$I$, such data again induces a
functor~$\intp{P}:\Set/I\to\Set/J$, called a \emph{polynomial functor}, obtained
as the composite $\intp{P}=\Sigma_t\circ\Pi_p\circ\Delta_s$
% \begin{equation}
  % \label{eq:poly-fun}
  % \begin{tikzcd}
    % \Set/I\ar[r,"\Delta_s"]&\Set/E\ar[r,"\Pi_p"]&\Set/B\ar[r,"\Sigma_t"]&\Set/J
  % \end{tikzcd}
% \end{equation}
where $\Delta_s$ is the pullback map along~$s$, and $\Sigma_t$ (\resp $\Pi_p$)
is the left (\resp right) adjoint to $\Delta_t$ (\resp $\Delta_p$) given by
post-composition by~$t$ (\resp local cartesian closure).

Alternatively, polynomial functors can be considered as acting on families
instead of slice categories. Given a set~$I$, it is well known that the slice
category $\Set/I$ of sets over~$I$ is equivalent to the category of families
indexed by~$I$,
\begin{equation}
  \label{eq:slice-fam}
  \Set/I\qequivto\Set^I
\end{equation}
% see~\cite[p.~28]{maclane2012sheaves}.
% Through this equivalence, the above functor can be seen as a functor
% $\Set^I\to\Set^J$, which is often convenient, obtained as a composite of the
% following functors on families:
% \begin{equation*}
% \begin{array}{rrcl}
  % \Delta_s:		&\Set^I			&\to		&\Set^E\\
  			% &\fam{X_i}{i\in I}	&\mapsto	&\fam{X_{s(e)}}{e\in E}
  % \\
  % \Pi_p:		&\Set^E			&\to		&\Set^B\\
  			% &\fam{X_e}{e\in E}	&\mapsto	&\fam{\prod_{e\in p^{-1}(b)}X_e}{b\in B}
  % \\
  % \Sigma_t:	&\Set^B			&\to		&\Set^J\\
  			% &\fam{X_b}{b\in B}	&\mapsto	&\fam{\sum_{b\in t^{-1}(j)}X_b}{j\in J}
% \end{array}
% \end{equation*}
% giving rise to the functor
% \begin{equation*}
% \begin{array}{rrcl}
  % \intp{P}:		&\Set^I			&\to		&\Set^J\\
  			% &\fam{X_i}{i\in I}	&\mapsto	&\fam{\sum_{b\in t^{-1}(j)}\prod_{e\in p^{-1}(b)}X_{s(e)}}{j\in J}
% \end{array}
% \end{equation*}
and, through this equivalence, the associated polynomial functor is $\intp{P}:\Set^I\to\Set^J$ such that
% \[
  % \begin{array}{r@{\ }r@{\ }c@{\ }l}
    % \intp{P}:&\Set^I&\to&\Set^J\\
    % &\fam{X_i}{i\in I}&\mapsto&\fam{\sum_{b\in t^{-1}(j)}\prod_{e\in p^{-1}(b)}X_{s(e)}}{j\in J}      
  % \end{array}
  % \]
\[
  \intp{P}\fam{X_i}{i\in I}=\fam{\sum_{b\in t^{-1}(j)}\prod_{e\in p^{-1}(b)}X_{s(e)}}{j\in J}      
\]

% A functor obtained as in \eqref{eq:poly-fun} is called a \emph{polynomial
  % functor} and we say that the diagram \eqref{eq:poly} is a presentation of the
% polynomial.
% %
% For instance, the identity functor on~$\Set/I$ is polynomial as being presented
% by
% \[
  % \begin{tikzcd}
    % I&\ar[l,"\id"']I\ar[r,"\id"]&I\ar[r,"\id"]&I
  % \end{tikzcd}
% \]
% As a slightly richer example, the functor
% \begin{align*}
  % \exp:\Set/1&\to\Set/1\\
  % X&\mapsto\sum_{n\in\N}X^n
% \end{align*}
% is also polynomial, as being presented by the diagram
% \[
  % \begin{tikzcd}
    % 1&\ar[l]\N\ar[r,"\id"]&\N\ar[r]&1
  % \end{tikzcd}
% \]
It can be shown (and this is non-trivial) that the composite of two polynomial
functors is again polynomial: this means that given two polynomial functors
generated by two diagrams of the form \eqref{eq:poly}, one can find a third
diagram of the form \eqref{eq:poly} which is a presentation for the composite of
the functors. We can thus build a category $\PolyFun$ where an object is a set
and a morphism $I\to J$ is a polynomial functor $\Set/I\to\Set/J$ (or
$\Set^I\to\Set^J$). Note that even though an operation of composition is defined
on polynomials \eqref{eq:poly}, we cannot build a category of those: their
composition being defined by using universal constructions, it will not be
strictly associative and the best we can hope for is a structure of a
bicategory. This motivates investigating the 2-categorical structure of
polynomials.

% These constructions can easily be generalized to any local cartesian closed
% category instead of~$\Set$: the local cartesian closure ensures the existence of
% the right adjoints $\Pi_f$ to change of base functors $\Delta_f$, which are
% necessary to define the polynomial functor~\eqref{eq:poly-fun} associated to a
% polynomial, as well as to define the composition of polynomials. We should also
% note here that the definition of composition crucially depends on the
% Beck-Chevalley isomorphism.

\subsection{2-categorical structure}
Given two polynomials $P$ and $P'$ of the form~\eqref{eq:poly}, both from~$I$
to~$J$, a morphism between them consists of two functions $\beta:B\to B'$ and
$\epsilon:E\to E'$ between the operations (\resp parameters) of $P$ and those
of~$P'$ making the diagram
\begin{equation}
  \label{eq:pol-mor}
  \begin{tikzcd}
    I\ar[d,equals]&\ar[d,"\varepsilon"']\ar[l,"s"']E\ar[r,"p"]\ar[dr,phantom,very near start,"\lrcorner"]&\ar[d,"\beta"]B\ar[r,"t"]&J\ar[d,equals]\\
    I&\ar[l,"s'"]E'\ar[r,"p'"']&B'\ar[r,"t'"']&J
  \end{tikzcd}  
\end{equation}
commute and such that the middle square is a pullback (such a morphism is
sometimes said to be \emph{cartesian} to insist on this requirement). This last
condition can be understood as requiring that $\beta$ preserves the arity of
operations (it is also technically important because there is no sensible way of
defining horizontal composition of morphisms without this condition). The
composition of polynomials defined above turns out to be associative up to
isomorphism, so that one can define a bicategory $\Poly$ whose 0-cells are sets,
1-cells are polynomials and 2-cells are morphisms of polynomials.

The resulting bicategory can be shown to be biequivalent to the 2-category
$\PolyFun$, obtained by adding suitable natural transformations as 2-cells to the above
category~\cite[Theorem~2.17]{gambino2009polynomial}. There is a subtlety
concerning the 2-cells: the ``natural'' notion of morphism between polynomial
functors, strong natural transformations, is more liberal than the notion of
morphism defined above on corresponding polynomials, and one can either restrict
those transformations (to cartesian natural transformations) or generalize the
notion of morphism between polynomials.

\subsection{Cartesian closed structure}
\label{sec:intro-closure}
The 1-category $\PolyFun$ of polynomial functors is cartesian. Considering
polynomial functors as operating on families through \eqref{eq:slice-fam}, as
explained above, the paring of two polynomial functors~$P:I\to J$ and $Q:I\to K$
is induced by the one in~$\Cat$:
\begin{align*}
  \pair PQ:\Set^I&\to\Set^J\times\Set^K\isoto\Set^{J\sqcup K}
\end{align*}
which motivates defining the product on objects of $\PolyFun$ as the coproduct
of sets.
We can hope that the category also has a closure for products induced by the one
in~$\Cat$. The sequence of bijections of hom-sets
% \[
  % \inferrule{
    % \inferrule{
      % \Set^I\times\Set^J\to\Set^K
    % }{\Set^I\to{(\Set^K)}^{\Set^J}}
  % }{\Set^I\to\Set^{\Set^J\times K}}
% \]
\[
  \Set^I\times\Set^J\to\Set^K
  \qisoto
  \Set^I\to{(\Set^K)}^{\Set^J}
  \qisoto
  \Set^I\to\Set^{\Set^J\times K}
\]
suggests that we define the closure as
\begin{equation}
  \label{eq:naive-closure}
  [J,K]=\Set^J\times K\equivto\Set/J\times K
\end{equation}
However, this does not make sense because the objects of $\PolyFun$ are sets and
this lives in a larger universe. This motivates considering polynomials which
are ``reasonably small''. We say that a polynomial~$P$ of the form
\eqref{eq:poly} is \emph{finitary} when every operation has a finite set of
parameters, \ie the set $E_b=p^{-1}(b)$ is finite for every
operation~$b$. Polynomial functors corresponding to finitary functors can be
shown to be those preserving filtered colimits and are sometimes called
\emph{normal functors}~\cite{girard1988normal,hasegawa2002two}. From now on, we
restrict our category to such polynomials, and consider them up to
isomorphism. This change allows us to replace $\Set/J$ by $\finSet/J$ in the
above definition of the exponential, where $\finSet$ is the class of finite sets (\ie we
consider finite families of set). While this is
certainly ``smaller'', this is still not a proper set; however, it is now
equivalent to a proper set, namely the set $\N/J$ of functions $[n]\to J$ for
some $n\in\N$, where $[n]=\set{0,\ldots,n-1}$ is a canonical choice of a finite
set with $n$ elements. To sum up, it can be shown that the resulting category is
still cartesian and we can define the exponential as $[J,K]=\N/J\times K$.

We have seen that the category of polynomial functors really is a 2-category (or
a bicategory if we consider polynomials instead) and it is natural to expect
that the cartesian closure would extend to the 2-categorical setting, by which
we mean that the isomorphism
\[
  \PolyFun(I\sqcup J,K)
  \isoto
  \PolyFun(I,\N/J\times K)
\]
between sets of isomorphism classes of polynomial functors should extend to an
equivalence of categories. It has however been observed that it is not the case,
see \cite[Remark~2.19]{girard1988normal},
\cite[Example~1.4.2]{taylor1989quantitative} and
\cite[Theorem~1.24]{hasegawa2002two}. As an illustration, consider the
polynomial functor $\intp{P}(X)=X^2$, which is induced by the polynomial~$P$
given by the diagram $1\ot 2\to 1\to 1$
% \[
  % \begin{tikzcd}
    % 1&\ar[l]2\ar[r]&1\ar[r]&1
  % \end{tikzcd}
% \]
(we write $n$ for a set with $n$ elements). It can be remarked that there are
two automorphisms on the polynomial~$P$: the identity and the morphism of the
form \eqref{eq:pol-mor} where
% $\beta:1\to 1$ is the identity and
$\varepsilon:2\to 2$ is the transposition.
% Writing $\tau:2\to 2$ for the
% transposition, there are two automorphisms on~$P$: the identity and 
% \[
  % \begin{tikzcd}
    % % 1\ar[d,equals]&\ar[l]2\ar[d,"\tau"']\ar[r]\ar[dr,phantom,very near start,"\lrcorner"]&1\ar[d,equals]\ar[r]&1\ar[d,equals]\\
    % 1&\ar[l]2\ar[r]&1\ar[r]&1
  % \end{tikzcd}
  % \]
The exponential transpose of~$P:0\sqcup 1\to 1$ is the polynomial
$P^\sharp:0\to\N/1\times 1$ (whose target is isomorphic to~$\N$) induced by the
diagram $0\ot 0\to 1\to\N$, where the morphism $1\to\N$ sends the element of~$1$
to $2\in\N$ (which is the arity of the unique operation of~$P$). Because $0$ and
$1$ are respectively initial and terminal in sets, the identity is the only
automorphism of~$P^\sharp$, whereas the two automorphisms of~$P$ should induce
two automorphisms on its exponential transpose.

% Since there are two distinct morphisms
% % \[
  % % P\To P:0\sqcup 1\to 1
  % % \]
% $P\To P:0\sqcup 1\to 1$,
% if we write $P^\sharp$ for the exponential transpose of~$P$, the closure should
% induce two distinct morphisms
% % \[
  % % P^\sharp\To P^\sharp:0\to\N^1\times 1
  % % \]
% $P^\sharp\To P^\sharp:0\to\N/1\times 1$, but this is not the case, because $0$
% is initial in the 2-category.\TODO{0 n'est pas initial, à revoir}

\subsection{Toward polynomial functors in groupoids}
\label{sec:set-pol}
This tension is solved in \cite{hasegawa2002two} by quotienting the 2-cells
under an ad-hoc equivalence relation. In this paper, we advocate that a more
satisfactory approach consists in switching from polynomial functors in sets to
polynomial functors in groupoids: intuitively, the problem comes from the fact
that, $\N$ being a set, there is no non-trivial endomorphism on a natural
number, which we should have if we were to have a closure for the cartesian
product. By switching to groupoids, we will be able to replace the set $\N$ in
the above construction by the groupoid~$\B$, which also has the natural numbers
as objects, whose morphisms are all automorphisms, such that the group of
automorphism on an object $n$ is the $n$-th symmetric group.

A proper definition of polynomials in groupoids requires more than
simply considering diagrams of the form~\eqref{eq:poly} in the category of
groupoids. For instance, the traditional definition of composition does not
immediately extend to those because the category of groupoids is not locally
cartesian closed, and thus the right adjoint to the change of base functor
$\Delta_f$ is not defined for every morphism~$f$, which prevents us from making
an immediate generalization of the definition of polynomial
functors (a way to address this consists in restricting to
polynomials~\eqref{eq:poly} where $p$ and $t$ are fibrations~\cite{vidmar2018polynomial}).
Following~\cite{kock2012data}, this can be explained as the fact that
switching from sets to groupoids can be thought of as switching from 0-truncated
spaces to 1-truncated spaces, where the strict universal limits and colimits
involved in the constructions on polynomials are not the right ones: we need to
take limits and colimits \emph{up to homotopy}. An important byproduct of
working in homotopy type theory as we do is that all internal constructions are
invariant up to homotopy, which avoids us explicitly dealing with those issues.

% In practice, it is convenient to avoid resorting to such constructions up to
% homotopy, by imposing additional properties on the maps involved in
% polynomials, typically being fibrations.

% \input{species}

\section{The bicategory of polynomial functors}
\label{sec:bicategory}
We now present our formalization in homotopy type theory of the main steps for
constructing the bicategory of polynomial functors. To be precise, we formalize
here the sub-bicategory of the usual one, where we only keep invertible 2-cells
(for which the problem for defining the closure is still non-trivial).
The developments have been performed in Agda, and are available in the
repository~\cite{git} based on our own formalization of homotopy type theory (or
HoTT), following the reference book~\cite{hottbook}, to which we suppose the
reader already acquainted.

We unfortunately do not have enough space here to expose in details the basic
definitions of Agda and homotopy type theory, and only recall some notations. We
write \code{Type} for the universe of small types
% (instead of the more traditional notation \code{Set})
and \code{Type₁} for the universe of large types (in particular \code{Type} is
an element of \code{Type₁}). We write $x\equiv y$ for the type of
\emph{identities} (or \emph{equalities} or \emph{paths}) between two terms $x$
and $y$ of the same type. We write $A\simeq B$ for the type of
\emph{equivalences} between two types $A$ and $B$ (possibly in different
universes): it consists of functions $f:A\to B$ admitting an inverse up to
homotopy, in a suitably coherent sense.
We recall that a \emph{proposition} is a type in any two elements are equal,
and a \emph{set} (\resp a \emph{groupoid}) is
a type in which the type $x\equiv y$ of identities between two elements $x$ and
$y$ is a proposition (\resp a set). We postulate here the \emph{univalence}
axiom which states that the canonical map from identities $x\equiv y$ to
equivalences $x\simeq y$ is itself an equivalence.

\subsection{Formalizing polynomials}
Our formalization of polynomials can be found in
\cite[\texttt{Polynomial.agda}]{git}. The direct translation of the
definition~\eqref{eq:poly} of polynomials, as consisting of two types $E$ and
$B$ and three functions $t$, $p$ and $s$ turns out to be quite cumbersome,
because it involves quite a lot of manipulations of identities, even for the
basic constructions of the category of polynomials.
%
% A direct translation of the definition of polynomials \eqref{eq:poly} would
% consist in encoding them as
% \begin{verbatim}
% record Poly (I J : Type) : Type₁ where
  % field
    % B : Type        t : B → J
    % E : Type        p : E → B
                    % s : E → I
% \end{verbatim}
% However, the constructions performed with this definition are much involved
% because they require manipulating equalities which quickly becomes quite
% intricate. As an illustration of this, the type of operations of the composite
% of two polynomials \code{P} and \code{Q} is
% \begin{verbatim}
% Σ (B Q) (λ b → (x : hfib (p Q) b) → hfib (t P) (s Q (fst x)))
% \end{verbatim}
% where \code{hfib f x} is the \emph{homotopy fiber} of a function \code{f} at \code{x}:
% \begin{verbatim}
% hfib : {A B : Type} → (f : A → B) → B → Type
% hfib {A} f b = Σ A (λ a → f a ≡ b)
% \end{verbatim}
% which involves an identity type (we leave to the reader as an exercise to
% complete the definition of composition in this style and prove basic properties
% of polynomials).
%
In the light of the equivalence~\eqref{eq:slice-fam}, it is much more convenient
to take the family point of view instead of the slice one, and use the following
definition, also known in the literature as an \emph{indexed
  container}~\cite{altenkirch2015indexed}. Namely, in a polynomial
\eqref{eq:poly}, the function $t:B\to J$ associates to each operation of the
polynomial in $B$ a color in $J$: this data can equivalently be encoded as a
family of types \code{Op : J → Type} which to every element \code{j} of \code{J}
associates the type \code{Op j} of operations colored by \code{j} of the
polynomial. By performing similar transformations on the rest of the data, we
reach the following definition, which is easier to work with because it uses
more heavily dependent types: the explicit computations we had to perform with
equality above are now implicitly handled by the dependent pattern matching of
Agda.

\begin{definition}
  \label[definition]{def:agda-pol}
  The type of \emph{polynomials} between two types \code{I} and \code{J} is
\begin{verbatim}
record Poly (I J : Type) : Type₁ where
  field
    Op : J → Type
    Pm : (i : I) → {j : J} → Op j → Type
\end{verbatim}  
\end{definition}

\noindent
Given types \code{I} and \code{J}, a polynomial from \code{I} to \code{J}
consists of:
\begin{itemize}
\item a family of types \code{Op j} indexed by the elements \code{j} of
  \code{J}: the \emph{operations} of the polynomial;
\item a family of type \code{Pm i b} indexed by the element \code{i} of
  \code{I} and the operations \code{b} in \code{Pm j} for some \code{j} of
  \code{J} (curly brackets indicate that this argument is usually left implicit): the \emph{parameters} of the polynomial.
\end{itemize}
% For simplicity, we have omitted mentioning universes for types, but of course
% the type of polynomials lives in a universe which is strictly larger than the
% one of the types involved in the polynomials.
%
% Note that this definition closely follows \cref{def:fib-pol}.

The careful reader will note that the above actually defines ``polynomials in
types'', where there is a type (as opposed to a set or a groupoid) of operations
and parameters. The notion of \emph{polynomial in groupoids} can be obtained by
further restricting to the case where all the involved types (\code{I},
\code{J}, \code{Op j} and \code{Pm i b}) are groupoids, in the sense of HoTT
recalled above.

\subsection{First constructions on polynomials}
The identity polynomial on a type \code{I} is defined as
\begin{verbatim}
Id : Poly I I
Op Id i = ⊤
Pm Id i {j = j} tt = i ≡ j
\end{verbatim}
and has, for each element \code{i} of \code{I}, exactly one operation of
type~\code{i}, whose only parameter is also of type \code{i}. The polynomial
functor induced by a polynomial \code{P} from \code{I} to \code{J} is
\begin{verbatim}
⟦_⟧ : Poly I J → (I → Type) → (J → Type)
⟦_⟧ P X j = Σ (Op P j) (λ c → (i : I) → (p : Pm P i c) → (X i))
\end{verbatim}
Finally, the composite \code{P · Q} of two polynomials \code{P} from \code{I} to
\code{J}, and \code{Q} from \code{J} to \code{K} is
\begin{verbatim}
_·_ : Poly I J → Poly J K → Poly I K
Op (P · Q) = ⟦ Q ⟧ (Op P)
Pm (P · Q) i (c , a) = Σ J (λ j → Σ (Pm Q j c) (λ p → Pm P i (a j p)))
\end{verbatim}
Morphisms between polynomials can be encoded as follows:

\begin{definition}
  \label[definition]{def:agda-pol-mor}
  The type of \emph{morphisms} between two polynomials \code{P} and~\code{Q} is 
\begin{verbatim}
record Poly→ (P Q : Poly I J) : Type where
  field
    Op→ : {j : J} → Op P j → Op Q j
    Pm≃ : {i : I} {j : J} {c : Op P j} → Pm P i c ≃ Pm Q i (Op→ c)
\end{verbatim}  
\end{definition}

\noindent
A morphism thus consists of:
\begin{itemize}
\item a morphism between the operations of \code{P} and those of \code{Q}, which
  respects the typing;
\item a morphism between the parameters of \code{P} and those of~\code{Q}, which
  respects typing and operations, and is moreover an equivalence: this last
  requirement corresponds precisely to imposing that the square in the middle of
  \eqref{eq:pol-mor} is a pullback.
\end{itemize}
We write \code{I ↝ J} (\resp \code{P ↝₂ Q}) for the type of polynomials from
\code{I} to \code{J} (\resp morphisms of polynomials from \code{P} to \code{Q}).

\begin{definition}
  \label[definition]{def:agda-pol-equiv}
  A morphism \code{ϕ} as above is an \emph{equivalence of polynomials} when the
  morphism on operations is an equivalence at each element of \code{J}:
\begin{verbatim}
Poly-equiv : {P Q : I ↝ J} (ϕ : P ↝₂ Q) → Type
Poly-equiv ϕ = {j : J} → is-equiv (Op→ ϕ {j = j})
\end{verbatim}
  We write \code{P ≃₂ Q} for the type of equivalences of polynomials between
  \code{P} and \code{Q}.
\end{definition}

\subsection{A bicategory of polynomials}
Starting from there we can build all the structure one expects to find in a
bicategory of polynomials:
\begin{itemize}
\item we can define the identity polynomial and the composition of
polynomials (see above);
\item we can show that composition of polynomials is associative and unital up
  to an equivalence of polynomials;
\item we can define the horizontal and vertical composition of morphism of
  polynomials;
\item we can show that those compositions are associative and unital up to a
  suitable notion of equivalence of morphisms of polynomials.
\end{itemize}
Moreover, by using univalence, one can show the following.

\begin{proposition}
  \label[proposition]{thm:pol-ua}
  The type of equivalences between two polynomials \code{P} and \code{Q} is
  equivalent to the type $\code{P}\equiv\code{Q}$ of equalities between the two
  polynomials: \code{(P ≡ Q) ≃ (P ≃₂ Q)}.
  % \begin{equation*}
    % \code{(P ≡ Q) ≃ (P ≃₂ Q)}
  % \end{equation*}
\end{proposition}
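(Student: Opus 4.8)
The plan is to prove this by the usual \emph{structure identity principle} argument: reduce an identity \code{P ≡ Q} to an identity of the underlying data using the characterization of paths in $\Sigma$-types, and then turn each component of that data into the corresponding piece of an equivalence of polynomials by means of function extensionality and univalence.

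First I would note that the record type \code{Poly I J} is equivalent to a $\Sigma$-type whose first component is the operation family \code{Op : J → Type} and whose second component is the parameter family \code{Pm} (whose type depends on \code{Op}). By the standard description of identities in a $\Sigma$-type, \code{P ≡ Q} is then equivalent to the type of pairs consisting of a path \code{p : Op P ≡ Op Q} together with a path between the \code{Pm} components of \code{P} and \code{Q} after transporting along \code{p}. Applying function extensionality and then univalence pointwise converts \code{Op P ≡ Op Q} into the type of families, indexed by \code{j : J}, of equivalences \code{Op P j ≃ Op Q j}; by commuting $\Pi$ and $\Sigma$ this is the same as a function \code{Op→} assigning to each \code{j} a map \code{Op P j → Op Q j}, together with a proof that \code{Op→} is an equivalence at every \code{j} — which is precisely the operation part of a morphism of polynomials plus the \code{Poly-equiv} condition of \cref{def:agda-pol-equiv}.

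It then remains to handle the \code{Pm} field, and this is the delicate point: since the type of \code{Pm} mentions \code{Op}, the second component of the $\Sigma$-type path lives in a type that has been transported along \code{p}. I would unfold this transport using the computation rules for \code{transport} over $\Pi$-types and, crucially, over the universe (where transporting along a path obtained from \code{ua} amounts to applying the corresponding equivalence), to show that it is equivalent to giving, for all \code{i}, \code{j}, and \code{c : Op P j}, a path \code{Pm P i c ≡ Pm Q i (Op→ c)}; one more use of function extensionality and univalence converts each such path into an equivalence \code{Pm P i c ≃ Pm Q i (Op→ c)}, which is exactly the \code{Pm≃} field of a morphism. Re-bundling the resulting nested $\Sigma$-type into the record presentation of \code{P ≃₂ Q} — using that \code{Poly→} and \code{Poly-equiv} are, up to equivalence, $\Sigma$-types — then yields the claimed equivalence \code{(P ≡ Q) ≃ (P ≃₂ Q)}.

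The main obstacle is not conceptual but a matter of bookkeeping in that last step: keeping track of exactly which transports appear and in which order the (largely implicit) arguments \code{i}, \code{j}, \code{c} are quantified, and checking that the underlying function extracted from the \code{Op}-part coincides on the nose with the \code{Op→} occurring in the codomain of \code{Pm≃}. In the Agda development this is where most of the effort lies, and it is made manageable by having on hand the standard toolkit of lemmas about identities in $\Sigma$-types, about \code{transport}, about function extensionality, and about univalence; the underlying principle is simply that univalent structures are identified precisely by their equivalences.
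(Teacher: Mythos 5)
Your proposal is correct and follows essentially the same route as the paper's (formalized) proof: the proposition is exactly an instance of the structure identity principle, obtained by characterizing identities of the record \code{Poly I J} as pairs of a path between the \code{Op} families and a path over it between the \code{Pm} families, and then using function extensionality and univalence to trade these for the pointwise equivalences \code{Op→}, the \code{Poly-equiv} condition, and \code{Pm≃}. The only point worth keeping in mind for the bookkeeping step you mention is that \code{Op} occurs contravariantly in the type of \code{Pm}, so the transport along the \code{Op}-path inserts the inverse equivalence on the operation argument, and one re-indexes along \code{Op→} to land on the stated form \code{Pm P i c ≃ Pm Q i (Op→ c)} — exactly the kind of transport manipulation handled in the Agda development.
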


We can therefore build a bicategory (it might be more accurate to call it a
\emph{$(2,1)$-category} since 2-cells are equivalences) of groupoids, polynomials
in groupoids and equivalences, in the following sense.

\begin{definition}
  \label[definition]{def:agda-prebicat}
  A \emph{prebicategory} consists of:
  \begin{itemize}
  \item a type of \code{ob} \emph{objects};
  \item for each objects \code{I} and \code{J}, we have a groupoid
    \code{hom I J} of \emph{morphisms};
  \item for each object \code{I} there is a distinguished morphism \code{id} in
    \code{hom I I} called \emph{identity};
  \item there is a composition operation
\begin{verbatim}
_⊗_ : hom I J → hom J K → hom I K
\end{verbatim}
    for every objects \code{I}, \code{J} and \code{K};
  \item composition is associative and unital:
\begin{verbatim}
assoc  : (P : hom I J) (Q : hom J K) (R : hom K L) →
         (P ⊗ Q) ⊗ R ≡ P ⊗ (Q ⊗ R)
unit-l : (P : hom I J) → id ⊗ P ≡ P
unit-r : (P : hom I J) → P ⊗ id ≡ P
\end{verbatim}
    % (we leave implicit the universal quantification on objects)
    % \code{I}, \code{J}, \code{K}, \code{L}),
  \item the traditional pentagon law
\begin{verbatim}
ap (λ P → P ⊗ S) (assoc P Q R) ∙ assoc P (Q ⊗ R) S ∙
   ap (λ Q → P ⊗ Q) (assoc Q R S) ≡ assoc (P ⊗ Q) R S ∙ assoc P Q (R ⊗ S)
\end{verbatim}
    and triangle law
\begin{verbatim}
assoc P id Q ∙ ap (λ Q → P ⊗ Q) (unit-l Q) ≡ ap (λ P → P ⊗ Q) (unit-r P)
\end{verbatim}
    of bicategories are satisfied for composable morphisms \code{P}, \code{Q},
    \code{R} and \code{S}.
  \end{itemize}
\end{definition}

\noindent
Above, ``\code{∙}'' denotes the concatenation of paths (or transitivity of
equality) and \code{ap} is a proof that every function is a congruence for
equality.

\begin{theorem}
  \label{thm:pre-bicat}
  There is a prebicategory whose objects are groupoids, 1-cells are polynomials
  in groupoids and 2-cells are equivalences of polynomials.
\end{theorem}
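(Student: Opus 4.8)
The plan is to assemble the data required by \cref{def:agda-prebicat} out of the constructions recalled in \cref{sec:bicategory}, and to convert every coherence into a statement about \emph{equivalences} of polynomials via \cref{thm:pol-ua}. For the objects I would take the type of groupoids (the definition of prebicategory imposes no condition on the collection of objects); for groupoids $\code{I}$ and $\code{J}$ I would take $\code{hom I J}$ to be the type $\code{I ↝ J}$ of polynomials in groupoids, $\code{id}$ to be the identity polynomial $\code{Id}$, and the composition $⊗$ to be composition of polynomials. A first, routine point is that $\code{Id}$ and a composite $P ⊗ Q$ are again polynomials in groupoids: for $\code{Id}$ this reduces to the identity types of $\code{I}$ being groupoids, and for $P ⊗ Q$ it follows from the explicit formulas for its operations and parameters together with closure of groupoids under $\Sigma$, $\Pi$ and identity types. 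Note that, by \cref{thm:pol-ua}, the $2$-cells --- which in a prebicategory are the identifications between parallel $1$-cells --- then coincide with equivalences of polynomials, exactly as the statement claims.

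The step I expect to be the main obstacle is showing that $\code{hom I J}$ is a groupoid. By \cref{thm:pol-ua} the identity type $\code{P ≡ Q}$ is equivalent to the type $\code{P ≃₂ Q}$ of equivalences of polynomials, and such an equivalence is assembled out of a family of maps between operation types and a family of equivalences between parameter types; since all of these are groupoids, and groupoids are closed under $\Pi$, $\Sigma$, identity types and the formation of types of equivalences, one can bound the truncation level of $\code{P ≃₂ Q}$, and hence of $\code{hom I J}$. This is precisely the place where replacing sets by groupoids is pulling its weight, and the delicate part of the argument is the bookkeeping of these truncation levels --- possibly including, if the type of polynomials in groupoids fails to be $1$-truncated on the nose, a harmless passage to its $1$-truncation, through which the rest of the structure still transports since $⊗$ is an ordinary function that one lifts through the truncation.

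For associativity and unitality I would invoke the facts recalled in \cref{sec:bicategory} --- that composition of polynomials is associative and unital up to an equivalence of polynomials --- and apply \cref{thm:pol-ua} to each such equivalence to obtain $\code{assoc}$, $\code{unit-l}$ and $\code{unit-r}$. It then remains to prove the pentagon and the triangle. Once $\code{hom I J}$ is known to be a groupoid, the type of $2$-cells between fixed parallel $1$-cells is a set, so each of the pentagon and the triangle is the equality of two parallel $2$-cells, hence a mere proposition, and it suffices to exhibit any inhabitant. For this I would transport the equation along the equivalence of \cref{thm:pol-ua}, using that this equivalence is the canonical one and therefore exchanges $\code{refl}$ with the identity equivalence, concatenation of paths with vertical composition of equivalences of polynomials, and $\code{ap}$ of pre- or post-composition by $⊗$ with the corresponding whiskering; the pentagon and triangle thereby reduce to the corresponding coherences among the structural equivalences of polynomials, which in turn unfold into standard coherences of dependent sum and product (associativity of $\Sigma$, distributivity of $\Pi$ over $\Sigma$, and the unit laws relating $\code{Id}$ to $⊗$). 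These last verifications are routine but laborious, and, together with the truncation-level analysis, carry essentially all the content of the theorem.
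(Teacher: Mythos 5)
Your route is the same as the paper's: objects are groupoids, 1-cells are polynomials in groupoids with the identity and composition recalled in \cref{sec:bicategory}, the structural equivalences giving associativity and unitality are converted into the required equalities \code{assoc}, \code{unit-l} and \code{unit-r} by \cref{thm:pol-ua}, and the pentagon and triangle are obtained by unfolding everything to coherences of dependent sums and products. Up to that point your proposal matches the paper's proof, whose details are delegated to the formalization.

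The genuine problem is in the step you yourself flag as the main obstacle. Closure of groupoids under $\Sigma$, $\Pi$, identity types and equivalence types bounds \code{P ≃₂ Q} at truncation level one, not zero: the type of equivalences between two groupoids is again a groupoid, in general not a set. By \cref{thm:pol-ua} the identity types of the type of polynomials in groupoids are therefore 1-types, so that type is a 2-type and not, in general, a groupoid. For a concrete failure, take \code{I} and \code{J} to be ⊤ and let \code{P} have the circle (or the delooping of any group with non-trivial centre) as its type of operations and ⊥ as parameters: then \code{P ≃₂ P} is the type of self-equivalences of the circle, whose loop space at the identity is $\mathbb{Z}$, so \code{P ≡ P} is not a set. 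So the truncation-level bookkeeping cannot deliver the groupoid condition on \code{hom I J} (a condition which the paper's text itself does not argue for explicitly), and it also removes the ground for your later claim that the pentagon and triangle are mere propositions. Moreover, your fallback of passing to the 1-truncation of \code{hom I J} is not harmless for the statement being proved: the 2-cells of the truncated hom are elements of the set-truncation of \code{P ≃₂ Q}, not equivalences of polynomials, and \cref{thm:pol-ua} no longer identifies them with equalities of the (untruncated) polynomials --- which is precisely the identification the paper exploits to treat equalities as 2-cells and to get the remaining bicategorical structure for free, and which your own reduction of the pentagon and triangle relies on. This step therefore needs a genuinely different treatment from the one you propose.
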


The notion of prebicategory generalizes the notion of precategory in
HoTT~\cite[Section~9.1]{hottbook}. As the name suggests, it lacks a property in
order to bear the name of a bicategory, similarly to the situation with
categories. A morphism \code{P} in \code{hom I J} in a prebicategory is an
\emph{internal equivalence} when there exists morphisms \code{Q} and \code{Q'}
both in \code{hom J I} such that \code{P ⊗ Q ≡ id} and \code{Q' ⊗ P ≡ id}, and
we write \code{I ≃' J} for the type of internal equivalences from \code{I} to
\code{J}. There is a canonical map associating to any equality of type \code{I ≡
  J} between two objects \code{I} and \code{J} an internal equivalence from
\code{I} to \code{J}. A \emph{bicategory} is a prebicategory in which this
canonical map is an equivalence, \ie we have $\code{(I ≡ J) ≃ (I ≃' J)}$ for
every objects \code{I} and \code{J}.

\begin{theorem}
  \label{thm:poly-bicat}
  The prebicategory of \cref{thm:pre-bicat} is a bicategory.
\end{theorem}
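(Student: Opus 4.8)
The plan is to show that for groupoids $I$ and $J$, the canonical map $\code{(I ≡ J) → (I ≃' J)}$ is an equivalence. I would begin by unpacking what an internal equivalence of the prebicategory of \cref{thm:pre-bicat} actually is: a polynomial $P : I ↝ J$ together with polynomials $Q, Q' : J ↝ I$ and equalities $P ⊗ Q ≡ \code{id}$ and $Q' ⊗ P ≡ \code{id}$. A standard $2$-out-of-$3$ style argument shows $Q$ and $Q'$ are themselves equivalences and can be taken equal, so without loss of generality an internal equivalence is a polynomial $P$ admitting a two-sided inverse up to equality of polynomials. Using \cref{thm:pol-ua}, equality of polynomials is the same as equivalence of polynomials ($\code{≃₂}$), so I can work throughout with the concrete record data of \cref{def:agda-pol} and \cref{def:agda-pol-mor} rather than with paths.

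The heart of the argument is to analyze which polynomials $P : I ↝ J$ are invertible. The composite $P ⊗ Q$ has operations $\intp{Q}(\mathrm{Op}\,P)$, and for this to be equivalent to $\code{Id}$, whose operation type at $j$ is $\top$, forces strong constraints: essentially every fibre $\mathrm{Op}\,P\,j$ must be contractible and the parameter families must be identity types, so that $P$ is (equivalent to) the ``graph'' polynomial of an equivalence $e : I ≃ J$ of groupoids — concretely $\mathrm{Op}\,P\,j = (e^{-1}j \text{-type data})$ and $\mathrm{Pm}\,P\,i\,c$ an identity type. I would make this precise by showing that the assignment $e \mapsto P_e$, sending an equivalence of groupoids to the corresponding polynomial (generalizing how $\code{Id}$ arises from the identity equivalence), is itself an equivalence from $I ≃ J$ onto the type of internal equivalences $I ≃' J$. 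This is the step I expect to be the main obstacle: it requires a somewhat delicate computation showing that invertibility of $P$ in the bicategorical sense pins down its operation and parameter families up to equivalence, using that composition is defined by the $\Sigma$/$\Pi$ formulas given above and that all types in sight are groupoids (so the relevant truncation levels behave well).

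Once the identification $\code{(I ≃' J) ≃ (I ≃ J)}$ is established, I would conclude by univalence: the paper postulates the univalence axiom, so $\code{(I ≡ J) ≃ (I ≃ J)}$ for groupoids (indeed for any types), and it remains only to check that the composite of this univalence equivalence with $\code{(I ≃ J) ≃ (I ≃' J)}$ agrees with the canonical comparison map of the prebicategory. This last coherence check is routine: both maps send $\code{refl}$ to the identity internal equivalence (which corresponds to $\code{Id}$), and a map out of $\code{I ≡ J}$ that is pointed in this sense is determined by its value on $\code{refl}$, so the two maps coincide. Hence the canonical map is an equivalence and the prebicategory of \cref{thm:pre-bicat} is a bicategory. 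In the formalization this is carried out in Agda; the bulk of the work is the invertibility analysis of the second paragraph, and the remaining steps are bookkeeping that the HoTT library handles smoothly.
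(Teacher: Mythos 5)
Your overall route is the natural one, and it is consistent with the little the paper itself says: the paper gives no in-text proof of this theorem (it is deferred to the Agda development, with only the remark that \cref{thm:pol-ua} is an ingredient), and the expected argument is indeed to identify internal equivalences $\code{I ≃' J}$ with equivalences of groupoids $\code{I ≃ J}$ and then invoke univalence, checking agreement with the canonical comparison map by evaluating at \code{refl} and path induction. Your last paragraph and your use of \cref{thm:pol-ua} to trade paths between polynomials for \code{≃₂}-data are fine.

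The problem is that the step you yourself flag as the main obstacle is exactly where the mathematical content of the theorem lies, and the justification you sketch for it does not work as stated. From $P \otimes Q \mathbin{\code{≃₂}} \code{Id}$ alone you cannot conclude that each $\code{Op}\,P\,j$ is contractible: the operations of the composite at $i$ form the type $\Sigma\,(c : \code{Op}\,Q\,i)\,\bigl((j : J) \to \code{Pm}\,Q\,j\,c \to \code{Op}\,P\,j\bigr)$, and contractibility of this mixed $\Sigma$/$\Pi$ type does not decompose into contractibility of the fibres of $P$ (it is, for instance, compatible with $\code{Op}\,P\,j$ being empty at some $j$ where the relevant $\code{Pm}\,Q\,j\,c$ is empty); even using both unit laws at the level of operations only is not enough, and one must bring in the parameter equivalences as well, or argue more structurally — e.g.\ that an internally invertible $P$ makes $\intp{P}$ an equivalence of family categories, hence $\intp{P}(\lambda i.\top)$ is again terminal, and $\intp{P}(\lambda i.\top)\,j \simeq \code{Op}\,P\,j$ gives the contractibility, after which evaluating on identity-type (``representable'') families identifies $\code{Pm}$ with an identity type. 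Moreover, establishing that $e \mapsto P_e$ is an equivalence onto $\code{I ≃' J}$ requires more than showing every internal equivalence is of this form: you need contractibility of the fibres, i.e.\ that $e$ is uniquely recovered from $P$ (true — apply the parameter equivalence at $j = e\,i$ to \code{refl} — but it is part of the work), and you should also verify that the untruncated biinvertibility data in the definition of $\code{I ≃' J}$ is a proposition before identifying that type with $\code{I ≃ J}$. None of these points makes the plan fail — the characterization of invertible polynomials as graph polynomials of groupoid equivalences is correct — but as written the proposal asserts rather than proves it, so the proof is incomplete precisely at its core.
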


\noindent
Note that \cref{thm:pol-ua}, in addition to proving the above theorem, allows us
to use equalities as 2\nbd-cells instead of morphisms in the sense of
\cref{def:agda-pol-equiv}. Because of this, the usual structure of bicategory
which is ``missing'' from \cref{def:agda-prebicat} (\eg horizontal and vertical
composition of 2-cells, the exchange law, etc.) is automatically present thanks
to the general properties of equality. If this was not the case (for instance,
if we wanted to consider morphisms of polynomials instead of equivalences as
2-cells), we would have had to use a much more involved notion of
bicategory~\cite{ahrens2019bicategories}.

\section{Naive cartesian closed structure}
\label{sec:naive}

\subsection{Cartesian structure}
The notion of being cartesian for such a bicategory can be formalized in the
expected way, by requiring the existence of a terminal object \code{T}, a binary
product operation \code{⊕} on objects and projection operations \code{projl :
  hom (I ⊕ J) I} and \code{projr : hom (I ⊕ J) J} for every objects \code{I} and
\code{J}, such that \code{hom I T ≡ ⊤} and the canonical function \code{hom I (J
  ⊕ K) → hom I J × hom I K} (obtained by post-composition with the projection
operations) is an equivalence (and thus an equality by univalence). One can
show:

\begin{theorem}
  The bicategory of \cref{thm:poly-bicat} is cartesian, with the product being
  defined by coproduct \code{⊔} on objects (the groupoids), first
  projection polynomial and pairing operation on polynomials being
  \begin{center}
    \begin{minipage}{0.45\linewidth}
\begin{verbatim}
projl : (I ⊔ J) ↝ I
Op projl i = ⊤
Pm projl (inl i) {i'} tt = i ≡ i'
Pm projl (inr j) {i'} tt = ⊥

\end{verbatim}      
    \end{minipage}
    \begin{minipage}{0.49\linewidth}
\begin{verbatim}
pair : (I ↝ J) → (I ↝ K) → I ↝ (J ⊔ K)
Op (pair P Q) (inl j) = Op P j
Op (pair P Q) (inr k) = Op Q k
Pm (pair P Q) i {inl j} c = Pm P i c
Pm (pair P Q) i {inr k} c = Pm Q i c
\end{verbatim}        
    \end{minipage}
  \end{center}
  (and second projection is similar to first projection).
\end{theorem}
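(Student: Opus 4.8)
The plan is to verify the definition of cartesian prebicategory spelled out just before the statement: we must exhibit a terminal object, a binary product operation on objects, projection 1-cells, and check that the canonical comparison map into the product is an equivalence (hence an equality by univalence). First I would take the terminal object \code{T} to be the empty groupoid \code{⊥}: a polynomial \code{I ↝ ⊥} has \code{Op : ⊥ → Type}, which is contractible (the unique function out of the empty type), and its parameter family is then vacuous as well; so \code{hom I T ≡ ⊤} follows by function extensionality and the fact that \code{Poly} is a record type with both fields contractible. Next I would take the product of groupoids \code{I} and \code{J} to be the coproduct \code{I ⊔ J} (which is again a groupoid, since coproducts of \(1\)-truncated types are \(1\)-truncated), with \code{projl}, \code{projr} and \code{pair} as displayed in the statement.

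The core of the argument is to show that the map
\[
  \code{hom I (J ⊔ K)} \;\longrightarrow\; \code{hom I J × hom I K},
  \qquad
  P \longmapsto (\code{projl} \otimes P',\ \code{projr} \otimes P'),
\]
obtained by post-composition with the projections, is an equivalence, with inverse given by \code{pair}. Here I would \emph{not} work directly with the composition \code{projl ⊗ P}: instead I would first prove a clean computational lemma that post-composing an arbitrary \code{P : I ↝ (J ⊔ K)} with \code{projl} yields a polynomial equivalent to the ``restriction'' of \code{P} to operations landing in the left summand — concretely, using the formula for \code{_·_}, \code{Op (projl ⊗ P) i} unfolds to \(\Sigma\,(\code{Op projl}\ i)\,(\ldots)\), and since \code{Op projl i = ⊤} and the parameter of \code{projl} at \code{inl i'} is \code{i ≡ i'} while at \code{inr j} it is \code{⊥}, the dependent function part forces exactly the data of \code{Op P (inl i)}; the parameters match up similarly. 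This gives \code{projl ⊗ P ≃₂ (P restricted to inl)} and likewise on the right. With that in hand, the round-trips are routine: \code{pair} of the two restrictions recovers \code{P} by case analysis on \code{inl}/\code{inr} in the \code{Op} and \code{Pm} fields (this is where univalence / \cref{thm:pol-ua} converts the evident \code{≃₂} into an actual equality \code{≡}), and conversely post-composing \code{pair P Q} with the projections gives back \code{P} and \code{Q} up to the same kind of equivalence. One also checks that the comparison map defined abstractly by post-composition agrees with this explicit one, up to the unit and associativity coherences of \cref{thm:pre-bicat}.

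The main obstacle I expect is bookkeeping rather than conceptual: the composite \code{projl ⊗ P} is defined via \code{⟦ P ⟧ (Op projl)} and a \(\Sigma\)-type of parameters, so establishing the restriction lemma requires massaging nested \(\Sigma\)-types and transporting along the equivalences \(\Sigma\,\top\,A \simeq A\), \(\Sigma\,\bot\,A \simeq \bot\), and the ``based path'' contraction \(\Sigma_{i'}(i \equiv i')\times B(i') \simeq B(i)\) — all of which are standard in HoTT but must be threaded through the dependent parameter family \code{Pm} coherently in both directions. Since we are in a \((2,1)\)-setting and, by the remark following \cref{thm:poly-bicat}, equalities of polynomials serve directly as \(2\)-cells with all higher coherence supplied automatically by the groupoid structure of identity types, no separate verification of naturality or of a universal property ``up to higher cells'' is needed beyond exhibiting the equivalence of hom-types; the univalence-driven identification \code{P ≃₂ Q → P ≡ Q} does the rest. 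In the formalization this is exactly the content of the relevant file in \cite{git}.
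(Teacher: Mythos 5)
Your proposal is correct and is essentially the argument the paper itself leaves to its Agda formalization \cite{git}: the empty groupoid as terminal object, \code{pair} as an explicit inverse to post-composition with the projections, a ``restriction'' lemma computing the composite with \code{projl} via the standard contraction equivalences, and \cref{thm:pol-ua} together with univalence to turn the resulting equivalences of polynomials into the required equalities of hom-groupoids. Two cosmetic remarks only: with the paper's convention \code{⊗ : hom I J → hom J K → hom I K} the comparison map sends \code{P} to \((\code{P ⊗ projl},\ \code{P ⊗ projr})\) rather than the order you wrote, and the \code{Op}-component of that composite is contracted using the \(\Pi\)-forms of your lemmas, namely \((\bot \to A) \simeq \top\) and \(\bigl(\textstyle\prod_{i'} (i \equiv i') \to B\,i'\bigr) \simeq B\,i\), the \(\Sigma\)-forms you list being what handles the \code{Pm}-component --- neither point affects the argument.
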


\subsection{Naive closed structure}
\label{sec:agda-naive-closure}
A first step toward constructing a right adjoint to the product is the
formalization of the naive closure described in \cref{sec:intro-closure} given
in~\cite[\code{LargePolynomial.agda}]{git}. Namely, the
formula~\eqref{eq:naive-closure} indicates that the hom space from \code{I} to
\code{J} should be \code{(I → Type) × J}.
% \[
  % \code{(I → Type) × J}.
% \]
Of course, we encounter the same size issues as mentioned in the introduction,
and we need to suppose that \code{Type} is the same as \code{Type₁} (\ie we
disable the checking of universe levels), which makes the logic inconsistent,
for this proof to go through. The formalization is still useful because it is a
simple version of the actual one for the closure, which is more involved but
does not require the extra assumption.

We define the \emph{exponential} of a type as
\begin{verbatim}
Exp : Type → Type₁
Exp I = I → Type
\end{verbatim}
so that the internal hom between two types \code{I} and \code{J} should be
\code{Exp I × J}. We can indeed define a currying map
\begin{verbatim}
curry : (I ⊔ J) ↝ K → I ↝ (Exp J × K)
Op (curry P) (jj , k) = Σ (Op P k) (λ c → ((λ j → Pm P (inr j) c) ≡ jj))
Pm (curry P) i c = Pm P (inl i) (fst c)
\end{verbatim}
which formally transforms an operation with a given set of inputs in
\code{J} into an operation with corresponding family of elements in \code{J} as
output (the inputs in \code{I} are preserved and the output in \code{K} is
preserved as the second component of the output). This could be graphically
pictured as
\[
  \begin{tikzpicture}[yscale=.6,baseline=(b.base)]
    \coordinate (b) at (0,0);
    \draw (0,.5) node{\code{c}};
    \draw [decoration={brace}, decorate] (-1,2.1) -- (-.1,2.1) node [pos=0.5,above] {\code{I}};
    \draw [decoration={brace}, decorate] (.1,2.1) -- (1,2.1) node [pos=0.5,above] {\code{J}};
    \draw (-.55,1.5) node {$\ldots$};
    \draw (.55,1.5) node {$\ldots$};
    \draw (-.9,1) -- (-.9,2);
    \draw (-.2,1) -- (-.2,2);
    \draw (.2,1) -- (.2,2);
    \draw (.9,1) -- (.9,2);
    \draw (-1,1) -- (1,1) -- (0,0) -- cycle;
    \draw (0,0) -- (0,-1) node[below]{\code{K}};
  \end{tikzpicture}
  \qquad\qquad\rightsquigarrow\qquad\qquad
  \begin{tikzpicture}[yscale=.4,baseline=(b.base)]
    \coordinate (b) at (0,0);
    \draw (0,.5) node{\code{c}};
    \draw [decoration={brace}, decorate] (-1,2.1) -- (-.1,2.1) node [pos=0.5,above] {\code{I}};
    \draw (-.55,1.5) node {$\ldots$};
    \draw (-.9,1) -- (-.9,2);
    \draw (-.2,1) -- (-.2,2);
    \draw (.2,1) -- (.2,1.2);
    \draw (.9,1) -- (.9,1.2);
    \draw (.9,1.2) arc (180:0:.15);
    \draw (.2,1.2) arc (180:0:.85);
    \draw (1.2,1.2) -- (1.2,.8) -- (-.2,-.8) -- (-.2,-1);
    \draw (1.9,1.2) -- (1.9,.8) -- (.5,-.8) -- (.5,-1);
    \draw (-1,1) -- (1,1) -- (0,0) -- cycle;
    \draw (0,0) -- (0,-.2) -- (1,-.8) -- (1,-1) node[below]{\code{K}};
    \draw (.15,-.9) node {$\ldots$};
    \draw [decoration={brace,mirror}, decorate] (-.3,-1.1) -- (.6,-1.1) node [pos=0.5,below] {\code{J}};
  \end{tikzpicture}
\]
Similarly, one can also define an uncurrying map
\begin{verbatim}
uncurry : I ↝ (Exp J × K) → (I ⊔ J) ↝ K
Op (uncurry P) k = Σ (Exp J) (λ jj → Op P (jj , k))
Pm (uncurry P) (inl i) (jj , c) = Pm P i c
Pm (uncurry P) (inr j) (jj , c) = jj j
\end{verbatim}

\begin{theorem}
  The above maps induce an adjunction: we have
  $\code{((I ⊔ J) ↝ K) ≡ (I ↝ (Exp J × K))}$.
\end{theorem}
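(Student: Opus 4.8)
The plan is to show that \code{curry} and \code{uncurry} are mutually inverse, hence constitute an equivalence $\code{((I ⊔ J) ↝ K)} \simeq \code{(I ↝ (Exp J × K))}$, which by univalence (cf. \cref{thm:pol-ua}) upgrades to the stated equality. Since a polynomial is a record with two components (\code{Op} and \code{Pm}), establishing an equivalence of polynomials in each direction amounts to exhibiting, for every polynomial \code{P}, an equivalence of operations $\code{Op (curry (uncurry P))} \simeq \code{Op P}$ (and symmetrically) that is compatible with the parameter families up to the equivalences \code{Pm≃} demanded by \cref{def:agda-pol-mor}. I would therefore build, in each direction, a term of type \code{P ↝₂ Q} whose action on operations is an equivalence (so that it is in fact a term of \code{P ≃₂ Q} in the sense of \cref{def:agda-pol-equiv}), and conclude by \cref{thm:pol-ua}.

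Concretely, first I would compute \code{uncurry (curry P)} for \code{P : (I ⊔ J) ↝ K}. Unfolding the definitions, an operation of \code{uncurry (curry P)} at \code{k} is a pair of a family \code{jj : Exp J} together with an operation of \code{curry P} at \code{(jj , k)}, i.e.\ an operation \code{c : Op P k} equipped with a proof that \code{(λ j → Pm P (inr j) c) ≡ jj}. This is exactly a based path space $\sum_{\code{jj}}\big(\sum_{\code{c : Op P k}} ((\lambda j.\,\code{Pm P (inr j) c}) \equiv \code{jj})\big)$, which is contractible in the \code{jj} coordinate (singleton contraction), hence equivalent to \code{Op P k}. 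The parameter families then match definitionally after this identification: a parameter over \code{inl i} unfolds to \code{Pm P (inl i) c} on both sides, and a parameter over \code{inr j} unfolds to \code{jj j} which, along the singleton path, is transported to \code{Pm P (inr j) c} — precisely the parameter of \code{P} over \code{inr j}. In the other direction, for \code{P : I ↝ (Exp J × K)}, an operation of \code{curry (uncurry P)} at \code{(jj , k)} is an operation \code{(jj' , c)} of \code{uncurry P} at \code{k} (so \code{jj' : Exp J}, \code{c : Op P (jj' , k)}) together with a proof \code{(λ j → Pm (uncurry P) (inr j) (jj' , c)) ≡ jj}, and since \code{Pm (uncurry P) (inr j) (jj' , c)} is by definition \code{jj' j}, this proof is exactly \code{jj' ≡ jj}; contracting this singleton identifies the operation with \code{c : Op P (jj , k)}, as desired, and again the parameter families agree on the nose.

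The main obstacle is bookkeeping rather than conceptual: one must verify that the operation-level singleton contractions are genuinely compatible with the parameter-level data in the sense required by \code{Pm≃}, which means carefully tracking the transports of \code{Pm} along the contracting paths and checking they are the identity (up to the coherence already packaged in \code{is-equiv}). In the Agda development this is handled smoothly because \code{Pm} is dependent and dependent pattern matching on \code{refl} collapses the transports; in a prose proof I would invoke the standard fact that $\sum_{y}\,(x \equiv y)$ is contractible and that transporting a fibration along the canonical path into such a singleton is computed by substitution, so that no nontrivial coherence obligation survives. Having produced equivalences of polynomials in both directions, \cref{thm:pol-ua} turns them into the equalities witnessing the adjunction, completing the proof.
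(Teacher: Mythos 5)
Your proposal is correct and follows essentially the same route as the paper: show \code{uncurry (curry P) ≃₂ P} and \code{curry (uncurry P) ≃₂ P}, then convert these polynomial equivalences into equalities via \cref{thm:pol-ua} and conclude by univalence. The extra detail you supply (the singleton contraction on the \code{Exp J}-component of the operations and the definitional matching of parameters) is exactly the content the paper leaves implicit in its one-line verification of the two equivalences.
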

\begin{proof}
  We can show that the two above maps are mutually inverse in the sense that we
  have
\begin{verbatim}
(uncurry (curry P) ≃₂ P)
\end{verbatim}
  and
\begin{verbatim}
(curry (uncurry P) ≃₂ P)
\end{verbatim}
  By \cref{thm:pol-ua}, the equivalences of polynomials \code{≃₂} can be turned
  into equalities: \code{curry} and \code{uncurry} thus form an equivalence, and
  we deduce the required equality by univalence.
\end{proof}

An alternative definition for the ``large exponential'' can be given as
follows. The equivalence~\eqref{eq:slice-fam} between slices and families
generalizes in type theory. Given a type \code{I}, we have an equivalence (and
thus an equality by univalence) between types over \code{I} and families of
types indexed by \code{I}, see~\cite[\code{Fam.agda}]{git}:

\begin{theorem}
  \label{thm:agda-slice-fam}
  Given a type \code{I}, we have the equivalence
  \code{(I → Type) ≃ (Σ Type (λ A → A → I))}.
\end{theorem}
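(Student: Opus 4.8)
The plan is to follow the standard proof that type families are fibrations \cite[Theorem~4.8.3]{hottbook}, as carried out in~\cite[\code{Fam.agda}]{git}: exhibit the two maps and check that they are mutually inverse, using univalence and function extensionality (both available, since univalence is postulated). In one direction, send a family $X : I \to \mathrm{Type}$ to its total space equipped with the first projection, that is, to the pair $\bigl(\Sigma_{i : I} X(i),\ \mathrm{fst}\bigr)$. In the other direction, send a type $A$ together with $f : A \to I$ to its family of fibres $i \mapsto \mathrm{fib}_f(i)$, where $\mathrm{fib}_f(i)$ is $\Sigma_{a : A} (f(a) \equiv i)$. Both assignments are immediate to write down; the content is in the two round-trips.

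For the round-trip on $I \to \mathrm{Type}$: starting from $X$, one must show that $i \mapsto \Sigma_{w : \Sigma_{i' : I} X(i')} (\mathrm{fst}(w) \equiv i)$ is pointwise equivalent to $X$. Re-associating the nested $\Sigma$'s turns the fibre over $i$ into $\Sigma_{i' : I}\bigl(X(i') \times (i' \equiv i)\bigr)$; swapping the two inner factors and then contracting the based path space $\Sigma_{i' : I}(i' \equiv i)$, which is contractible with centre $i$, leaves $X(i)$. This is a pointwise equivalence, hence a pointwise equality by univalence, hence an equality of families by function extensionality.

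For the round-trip on types over $I$: starting from $(A, f)$, the total space of the family of fibres is $\Sigma_{i : I}\Sigma_{a : A}(f(a) \equiv i)$; re-associating gives $\Sigma_{a : A}\Sigma_{i : I}(f(a) \equiv i)$, and contracting the based path space $\Sigma_{i : I}(f(a) \equiv i)$ (contractible with centre $f(a)$) identifies this total space with $A$. The step I expect to be the main obstacle is that here one needs an equality of \emph{pairs} in $\Sigma\,\mathrm{Type}\,(\lambda A.\, A \to I)$, not merely of the underlying types: under the equivalence $A \simeq \Sigma_{i : I}\mathrm{fib}_f(i)$ just constructed, one must verify that the first projection corresponds to $f$. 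I would handle this by first characterising equality in $\Sigma\,\mathrm{Type}\,(\lambda A.\, A \to I)$ --- via the $\Sigma$-path lemma together with univalence and \code{happly} --- as a pair consisting of an equivalence $e : A \simeq A'$ and a homotopy $f' \circ e \sim f$; one then supplies the equivalence above and checks the triangle, which reduces to the observation that this equivalence sends $a$ to the triple $(f(a),\, a,\, \mathrm{refl})$, whose first projection is (definitionally) $f(a)$. Combining the two round-trips yields the desired equivalence $\code{(I → Type) ≃ (Σ Type (λ A → A → I))}$.
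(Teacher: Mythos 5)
Your proof is correct and is essentially the argument the paper relies on (the paper defers to the formalization in \code{Fam.agda}, which carries out this standard family/fibration equivalence à la \cite[Theorem~4.8.3]{hottbook}): total space with first projection one way, fibres the other way, both round-trips by contracting based path spaces, with univalence and function extensionality to turn the pointwise equivalences into the required equalities. The point you single out as the main obstacle --- characterising paths in \code{Σ Type (λ A → A → I)} as an equivalence together with a commuting triangle over \code{I} --- is indeed the only non-routine step, and your handling of it is right.
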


\noindent
This indicates that we could equivalently have taken the right member of the
above equivalence as a definition of \code{Exp I} above.

% \noindent
% This implies that, if we had defined exponential as
% \begin{verbatim}
% Exp : Type → Type₁
% Exp I = Σ Type (λ A → A → I)
% \end{verbatim}
% we would also be able to show the adjunction result described above.

\section{Finite types}
\label{sec:fintype}
Following the plan of \cref{sec:intro-closure} for the proof, we will restrict
to finitary polynomials in order to have a smaller exponential in the closure,
for which we can handle the size issues. In this section, we first formalize the
notion of finiteness for a type, which will then be used to define finitary
functors.

\subsection{Definition and properties}
The proofs associated to this section can be found
in~\cite[\texttt{FinType.agda}]{git}. As customary, we write \code{Fin n} for
the canonical type with $n$ elements, its constructors being the natural numbers
$0$ up to $n-1$. Given a type \code{A}, we write \code{∥ A ∥} for its
\emph{propositional truncation}: an element of this type can be thought of as a
witness that there exists a proof of \code{A}, without explicitly providing such
a proof~\cite[Section~3.7]{hottbook}.

\begin{definition}
  A type is \emph{finite} when it is merely equivalent to the type with $n$
  elements for some natural number~$n$:
\begin{verbatim}
is-finite : Type → Type
is-finite A = Σ ℕ (λ n → ∥ A ≃ Fin n ∥)
\end{verbatim}  
\end{definition}

\begin{remark}
  In case one wonders why we chose to use a truncation in the above definition,
  let us mention that using the definition \code{is-finite' A = Σ ℕ (λ n → A ≃
    Fin n)} would be bad because \code{is-finite' A} is not a proposition,
  \code{is-finite' A} is a large type, and the collection of all finite types is
  actually the natural numbers in the sense that one can construct an
  equivalence \code{Σ Type is-finite' ≃ ℕ}.
\end{remark}

% Given a finite type~\code{A}, it can be shown that there is at most one natural
% number \code{n} for which \code{∥ A ≃ Fin n ∥} holds, and thus deserves to be
% called the \emph{cardinality} of the finite set: this follows from the fact that
% the type constructor \code{Fin} is injective in the sense that
% \code{Fin m ≡ Fin n} implies \code{m ≡ n} (note that not every type constructor
% is so in type theory so that the proof cannot be generic).

%
The following \cref{lem:cardinality} shows that, for a finite type \code{A},
there is a well-defined notion of \emph{cardinality} which is the natural number
\code{n} such that \code{∥ A ≃ Fin n ∥} holds. In order to show it, we first
need an auxiliary result.

\begin{lemma}
  \label[lemma]{lem:Fin-inj}
  The type constructor \code{Fin} is injective: \code{Fin m ≡ Fin n} implies
  \code{m ≡ n}.
\end{lemma}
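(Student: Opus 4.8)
The plan is to first turn the hypothesis \code{Fin m ≡ Fin n} into an equivalence \code{e : Fin m ≃ Fin n} (by \code{transport} along this identity, i.e.\ via \code{idtoeqv}; this direction does not even need univalence), and then prove the sharper statement that \code{Fin m ≃ Fin n} already implies \code{m ≡ n}, by induction on \code{m} and \code{n}. Since each \code{Fin k} is a set, the equivalences at play are just bijections and no higher coherence data intervenes. The three boundary cases are immediate: if \code{m} and \code{n} are both \code{0} there is nothing to prove; if exactly one of them is \code{0}, say \code{m} is \code{0} and \code{n} is \code{suc n'}, then applying the inverse of \code{e} to \code{0 : Fin (suc n')} yields an element of \code{Fin 0 ≃ ⊥}, a contradiction (symmetrically in the other case).

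The real content is the step case \code{m = suc m'}, \code{n = suc n'}. Here I would use the standard equivalence \code{Fin (suc k) ≃ Fin k ⊔ ⊤} to view \code{e} as an equivalence \code{Fin m' ⊔ ⊤ ≃ Fin n' ⊔ ⊤}, and then invoke a cancellation lemma: for types \code{A}, \code{B} with decidable equality, \code{A ⊔ ⊤ ≃ B ⊔ ⊤} implies \code{A ≃ B}. This yields \code{Fin m' ≃ Fin n'}, so the induction hypothesis gives \code{m' ≡ n'}, whence \code{suc m' ≡ suc n'} by \code{ap suc}. (Decidability of equality on \code{Fin k} is itself a routine induction on \code{k}.)

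The main obstacle is the cancellation lemma, because a given equivalence \code{f : A ⊔ ⊤ ≃ B ⊔ ⊤} need not send the adjoined point \code{inr tt} to \code{inr tt}. I would fix this by correcting \code{f} with a transposition: inspect \code{f (inr tt)}; if it equals \code{inr tt}, then \code{f} carries \code{A} bijectively onto \code{B} and restricts to the sought equivalence; if it equals \code{inl b'} for some \code{b' : B}, post-compose \code{f} with the self-equivalence of \code{B ⊔ ⊤} swapping \code{inl b'} and \code{inr tt} — an equivalence precisely because \code{B} has decidable equality, so one can decide for each element whether it equals \code{b'} — obtaining an equivalence that does fix \code{inr tt}. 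The operation of \emph{restricting to the complement of a fixed point} can be packaged uniformly: any \code{g : X ≃ Y} with \code{g x₀ ≡ y₀} induces an equivalence between \code{Σ X (λ x → ¬ (x ≡ x₀))} and \code{Σ Y (λ y → ¬ (y ≡ y₀))}; when \code{X} is \code{A ⊔ ⊤} and \code{x₀} is \code{inr tt}, this complement is equivalent to \code{A} (since equality with \code{inr tt} is decidable on \code{A ⊔ ⊤}), and likewise for \code{B}. Composing these equivalences delivers \code{A ≃ B} and closes the induction.
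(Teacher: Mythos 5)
Your proposal is correct, and its overall shape matches the paper's: both reduce the statement to a cancellation lemma for adjoining a point to a type and then conclude by induction on the cardinalities, via \code{Fin (suc n) ≃ ⊤ ⊔ Fin n}. The difference is in how the cancellation lemma is proved, and in what generality. The paper proves \code{⊤ ⊔ A ≃ ⊤ ⊔ B} implies \code{A ≃ B} for \emph{arbitrary} types: from \code{f : ⊤ ⊔ A → ⊤ ⊔ B} it directly builds \code{f' : A → B} by sending \code{a} to \code{f a} when that value lies in \code{B} and to (the \code{B}-component of) \code{f tt} otherwise --- a case analysis on the coproduct, needing no transposition and no decidable equality. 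You instead correct the given equivalence by a swap of \code{f (inr tt)} with \code{inr tt} and then restrict to complements of the marked points; this is sound (the predicates \code{¬ (x ≡ x₀)} are propositions preserved in both directions by the equivalence, and the complement of \code{inr tt} in \code{A ⊔ ⊤} is indeed equivalent to \code{A}), but it costs you a decidable-equality hypothesis on \code{B}, which is harmless here since \code{Fin k} has decidable equality, yet makes the lemma less general than the paper's. A further minor difference: the paper passes through univalence to turn the equivalence into an equality and argues with \code{≡} (while noting a univalence-free proof exists), whereas you stay at the level of equivalences throughout, using only the direction of \code{idtoeqv} that requires no univalence.
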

\begin{proof}
  Given types \code{A} and \code{B}, a function \code{f : ⊤ ⊔ A → ⊤ ⊔ B} induces
  a function \code{f' : A → B} such that \code{f' a} is \code{f a} if it belongs
  to \code{B}, or \code{f tt} otherwise. Using this construction, one can show
  that \code{⊤ ⊔ A ≃ ⊤ ⊔ B} implies \code{A ≃ B}, and thus that \code{⊤ ⊔ A ≡ ⊤
    ⊔ B} implies \code{A ≡ B} by univalence. We then conclude by induction using
  the fact that \code{Fin (suc n) ≡ ⊤ ⊔ Fin n}. It is also possible to prove
  this fact without resorting to univalence~\cite{kidney2019small}.
\end{proof}

\begin{lemma}
  \label[lemma]{lem:cardinality}
  Given a type \code{A}, if \code{∥ A ≃ Fin m ∥} and \code{∥ A ≃ Fin n ∥} then
  \code{m ≡ n}.
\end{lemma}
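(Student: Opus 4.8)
The plan is to reduce this to \cref{lem:Fin-inj} by eliminating the propositional truncations in the hypotheses. The key observation is that the goal \code{m ≡ n} is a \emph{proposition}: since \code{ℕ} has decidable equality it is a set (by Hedberg's theorem), and hence its identity types are propositions. This is precisely the side condition required to apply the recursion principle of propositional truncation.

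Concretely, I would proceed as follows. Applying truncation elimination to \code{∥ A ≃ Fin m ∥} with motive the proposition \code{m ≡ n}, we may assume an honest equivalence \code{e : A ≃ Fin m}; similarly, from \code{∥ A ≃ Fin n ∥} we obtain \code{f : A ≃ Fin n}. Composing, \code{e ⁻¹} followed by \code{f} gives an equivalence \code{Fin m ≃ Fin n}. By univalence this yields an equality \code{Fin m ≡ Fin n}, and \cref{lem:Fin-inj} then produces the desired \code{m ≡ n}.

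I do not expect any real obstacle here: the only point that needs care is justifying that \code{m ≡ n} is a proposition so that the truncations can be discharged, and the remainder is a routine composition of equivalences together with an appeal to the previous lemma. (One could also avoid univalence entirely by transporting injectivity of \code{Fin} along the equivalence directly, but invoking \cref{lem:Fin-inj} as stated is the cleanest route.)
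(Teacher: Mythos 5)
Your argument is correct and coincides with the paper's own proof: both discharge the truncations by noting that \code{m ≡ n} is a proposition (since \code{ℕ} is a set), compose the two equivalences to obtain \code{Fin m ≃ Fin n}, pass to \code{Fin m ≡ Fin n} by univalence, and conclude by \cref{lem:Fin-inj}. No gaps.
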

\begin{proof}
  The equality between natural numbers is a proposition (\code{ℕ} is a set) and
  we can thus forget about the proposition truncations (by using the
  corresponding elimination rule). By transitivity, we should have \code{Fin m
    ≃ Fin n}, thus \code{Fin m ≡ Fin n} by univalence and thus \code{m ≡ n} by
  injectivity of the type constructor \code{Fin} by \cref{lem:Fin-inj}.
\end{proof}

\noindent
Using this, one can deduce that being finite is a proper predicate:

\begin{proposition}
  \label[proposition]{prop:is-finite-is-prop}
  Being finite for a type is a proposition.
\end{proposition}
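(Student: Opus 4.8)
The plan is to use the standard criterion that a type is a proposition as soon as any two of its elements can be proved equal. So I take two elements \code{(m , u)} and \code{(n , v)} of \code{is-finite A}, where \code{u : ∥ A ≃ Fin m ∥} and \code{v : ∥ A ≃ Fin n ∥}, and must construct an equality \code{(m , u) ≡ (n , v)}.

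The first components are reconciled by \cref{lem:cardinality}: applied to \code{u} and \code{v} it produces an equality \code{q : m ≡ n}. Note that we are allowed to feed the truncated witnesses \code{u} and \code{v} to this lemma precisely because its conclusion \code{m ≡ n} is a proposition — indeed \code{ℕ} is a set — so the propositional truncations may be eliminated there.

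For the second component I would invoke the characterization of equality in $\Sigma$-types: an element of \code{(m , u) ≡ (n , v)} amounts to a pair consisting of the equality \code{q : m ≡ n} just obtained, together with an equality in \code{∥ A ≃ Fin n ∥} between \code{v} and the transport of \code{u} along \code{q}. But propositional truncations are mere propositions, so this second equality exists automatically. This yields the desired equality of pairs, hence \code{is-finite A} is a proposition.

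There is no genuinely hard step here: all the mathematical content sits in \cref{lem:cardinality} (itself resting on the injectivity of \code{Fin}, \cref{lem:Fin-inj}), and what remains is the routine bookkeeping for $\Sigma$-types and truncations. The only point worth emphasising is \emph{why} the natural-number component does not destroy propositionality: it is not free data but is uniquely pinned down, via \cref{lem:cardinality}, by either truncated equivalence — without this, as the preceding remark observes, \code{Σ Type is-finite'} would merely be \code{ℕ} rather than \code{is-finite} carving out a well-behaved sub-collection of the universe.
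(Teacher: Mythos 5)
Your proof is correct and follows essentially the same route as the paper: the first components are identified via \cref{lem:cardinality} (eliminating the truncations into the proposition \code{m ≡ n}), and the second components agree because propositional truncations are propositions. Your explicit handling of the $\Sigma$-type equality via transport is just a more detailed spelling-out of the paper's terse argument, not a different approach.
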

\begin{proof}
  Suppose given two proofs of \code{is-finite A} for some type \code{A}. Those
  have the same first component by \cref{lem:cardinality} and the same second
  component by definition of the propositional truncation. They are thus equal.
\end{proof}

\begin{remark}
  As a corollary of previous lemma, we get the fact that \code{is-finite A} is
  equivalent to \code{∥ Σ ℕ (λ n → A ≃ Fin n) ∥}, which could thus have served
  as an alternative definition.
\end{remark}

\noindent
Finite types satisfy the expected basic properties:

\begin{proposition}
  \label[proposition]{prop:fin-closure}
  The types \code{⊥}, \code{⊤} and \code{Fin n} are finite. Finite types are
  closed under (dependent) sums and products.
\end{proposition}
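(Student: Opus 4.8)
The plan is to handle the base cases and the closure properties in turn. For the base cases, \code{⊥}, \code{⊤} and \code{Fin n} are each merely equivalent to \code{Fin 0}, \code{Fin 1} and \code{Fin n} respectively, with the witnessing equivalences being immediate; so in each case we supply the evident natural number together with the obvious inhabitant of the truncated type of equivalences.

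I would then isolate the two arithmetic facts \code{Fin m ⊔ Fin n ≃ Fin (m + n)} and \code{Fin m × Fin n ≃ Fin (m * n)}, each proved by a routine induction on \code{m} using the decomposition \code{Fin (suc m) ≃ ⊤ ⊔ Fin m} already exploited in the proof of \cref{lem:Fin-inj} (these are likely available directly in the standard library). Closure under binary coproducts and binary products then follows at once: given \code{is-finite A} and \code{is-finite B}, take the cardinalities \code{m} and \code{n} to be the respective first projections, and obtain \code{∥ A ⊔ B ≃ Fin (m + n) ∥} (and similarly for products) by eliminating the truncations in the hypotheses — which is legitimate since the goal is a proposition — composing the extracted equivalences with the arithmetic ones, and re-truncating.

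For the dependent cases, the key point is that \code{is-finite (Σ A B)}, and its analogue for dependent products, are themselves propositions by \cref{prop:is-finite-is-prop}, so we may eliminate the truncation in the second component of \code{is-finite A} and work with an honest equivalence \code{A ≃ Fin m}. Transporting the family \code{B} along it, it suffices to show that \code{Σ (Fin m) B'} and \code{(i : Fin m) → B' i} are finite whenever every \code{B' i} is, which we prove by induction on \code{m}: the case \code{m = 0} reduces to finiteness of \code{⊥} (\resp \code{⊤}), and the successor step uses \code{Fin (suc m) ≃ ⊤ ⊔ Fin m} together with the evident equivalences \code{Σ (⊤ ⊔ X) C ≃ C (inl tt) ⊔ Σ X (C ∘ inr)} and \code{((⊤ ⊔ X) → C) ≃ C (inl tt) × (X → C ∘ inr)}, followed by the binary coproduct and product closure just established; at each step only finitely many values of \code{B'} are consulted, so the truncations can again be stripped locally.

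The only genuinely delicate aspect is the bookkeeping of truncations: one must eliminate \code{∥ - ∥} only into goals that have been verified to be propositions, which is exactly what \cref{prop:is-finite-is-prop} guarantees throughout. Everything else is a straightforward induction together with the standard cardinality arithmetic on \code{Fin}.
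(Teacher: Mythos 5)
Your proposal is correct, and it matches what the paper's development does: the paper states \cref{prop:fin-closure} without a written proof (it is discharged in the Agda formalization), and your argument uses exactly the ingredients visible in the neighbouring proofs — the decomposition \code{Fin (suc n) ≃ ⊤ ⊔ Fin n}, cardinality arithmetic on \code{Fin}, and elimination of propositional truncations only into propositional goals as licensed by \cref{prop:is-finite-is-prop}. Your bookkeeping is sound, in particular because the cardinality is the untruncated first component of \code{is-finite}, so only the mere equivalences need to be combined, and that combination happens inside a propositional goal.
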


\begin{proposition}
  \label[proposition]{prop:fin-equiv}
  Given equivalent types \code{A} and \code{B}, \code{A} is finite if and only
  if \code{B} is.
\end{proposition}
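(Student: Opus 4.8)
The plan is to unfold the definition of \code{is-finite} and push the given equivalence through the propositional truncation. Concretely, suppose we are given an equivalence \code{e : A ≃ B}; by the symmetry of the biconditional it suffices to prove the single implication that \code{is-finite A} entails \code{is-finite B}, since we can then reapply that implication to the inverse of \code{e} (an equivalence \code{B ≃ A}) to obtain the converse.

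So assume \code{is-finite A}, that is, a natural number \code{n} together with a term of \code{∥ A ≃ Fin n ∥}; we claim \code{is-finite B} holds with the same cardinality \code{n}. It remains to produce a term of \code{∥ B ≃ Fin n ∥}, and since this type is a proposition we may apply the recursion principle of the propositional truncation: from an actual equivalence \code{f : A ≃ Fin n} we form the composite of the inverse of \code{e} with \code{f}, an equivalence \code{B ≃ Fin n} (using transitivity and symmetry of \code{≃}), and then truncate it. This establishes the implication, hence the proposition.

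An even shorter route invokes univalence at the outset: the equivalence \code{e : A ≃ B} yields an equality \code{A ≡ B}, along which we transport the proof of \code{is-finite A} to get \code{is-finite B}, and conversely by transporting backwards. Since \code{is-finite} is a type family, transport is immediately available. Either way there is essentially no obstacle; the only mild bookkeeping point is the passage through the truncation, which is unproblematic precisely because \code{∥ B ≃ Fin n ∥} is a proposition. We note in passing that the argument in fact shows that equivalent finite types have equal cardinality, which also follows directly from \cref{lem:cardinality}.
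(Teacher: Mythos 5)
Your proof is correct. The paper itself states this proposition without a written proof (it is discharged in the Agda formalization), and your argument is exactly the expected one: either eliminate the propositional truncation into the proposition \code{∥ B ≃ Fin n ∥} and compose \code{f : A ≃ Fin n} with the inverse of \code{e : A ≃ B}, or, even more directly, turn \code{e} into an equality by univalence and transport \code{is-finite} along it; both routes are sound, and your remark that the cardinality is preserved is consistent with \cref{lem:cardinality}.
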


\begin{proposition}
  \label[proposition]{prop:fin-union}
  Given types \code{A} and \code{B}, we have that \code{A ⊔ B} is finite if and
  only if both \code{A} and \code{B} are finite.
\end{proposition}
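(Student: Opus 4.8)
The plan is to prove the two implications of the biconditional separately, each time reducing to the closure of finite types under dependent sums provided by \cref{prop:fin-closure}. For the direction asserting that \code{A ⊔ B} is finite when both \code{A} and \code{B} are, I would observe that \code{A ⊔ B} is equivalent to the dependent sum $\Sigma_{k : \code{Fin 2}}\, C(k)$, where $C : \code{Fin 2} \to \code{Type}$ is the family sending $0$ to \code{A} and $1$ to \code{B}; that $\Sigma_{k : \code{Fin 2}}\, C(k) \simeq \code{A ⊔ B}$ is a direct check. Since \code{Fin 2} is finite and each $C(k)$ is finite by hypothesis, \cref{prop:fin-closure} makes the dependent sum finite, and \cref{prop:fin-equiv} then transports finiteness back along the equivalence to \code{A ⊔ B}.

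For the converse, suppose \code{A ⊔ B} is finite; I would show \code{A} is finite, the case of \code{B} being entirely symmetric (swap the two injections throughout). Because being finite is a proposition (\cref{prop:is-finite-is-prop}), I may eliminate the propositional truncation in the hypothesis and work with an honest equivalence $e : \code{A ⊔ B} \simeq \code{Fin n}$ for some \code{n}. I then define a family $P : \code{Fin n} \to \code{Type}$ by case analysis on $e^{-1}(k) : \code{A ⊔ B}$, putting $P(k) = \top$ when $e^{-1}(k)$ is a left injection and $P(k) = \bot$ when it is a right injection; each $P(k)$ is thus $\top$ or $\bot$, hence finite by \cref{prop:fin-closure}. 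Reindexing $\Sigma_{k : \code{Fin n}}\, P(k)$ along $e$ and computing the fibres (over $\mathrm{inl}\,a$ the fibre is $\top$, over $\mathrm{inr}\,b$ it is $\bot$) yields
\[
  \Sigma_{k : \code{Fin n}}\, P(k) \;\simeq\; \Sigma_{x : \code{A ⊔ B}}\, (x \text{ is a left injection}) \;\simeq\; \code{A}.
\]

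Since \code{Fin n} is finite and every $P(k)$ is finite, closure under dependent sums (\cref{prop:fin-closure}) shows $\Sigma_{k : \code{Fin n}}\, P(k)$ is finite, so \code{A} is finite by \cref{prop:fin-equiv}. The only delicate point is this converse direction: one has to invoke that finiteness is a proposition \emph{before} destructing the equivalence $e$, so that the truncation elimination is legitimate, and one has to recognise \code{A} as a decidable sub-type of \code{Fin n} — that is, as a dependent sum over \code{Fin n} whose fibres are (decidable propositions, hence) finite. Everything else is a routine application of the closure lemmas already established.
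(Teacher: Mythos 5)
Your proof is correct, but it distributes the work differently from the paper. For the converse (\code{A ⊔ B} finite implies \code{A} finite) the paper also realizes \code{A} as a decidable subtype of the finite type \code{A ⊔ B}, but it carries the weight with a dedicated lemma — a decidable subtype of \code{Fin n} is again some \code{Fin k} — applied to the predicate \code{split x ≡ true}, whose decidability comes from decidable equality on \code{Bool}; you instead transport along the equivalence \code{e} to define a family on \code{Fin n} with fibres literally \code{⊤} or \code{⊥} and invoke closure of finite types under dependent sums (\cref{prop:fin-closure}) together with \cref{prop:fin-equiv}, which lets you bypass any explicit decidability argument and the auxiliary subtype-of-\code{Fin n} lemma. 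Your handling of the truncation (eliminating into the proposition \code{is-finite A}, justified by \cref{prop:is-finite-is-prop}) matches the paper's, and your fibrewise computation $\Sigma_{k}P(k)\simeq\Sigma_{x:\code{A ⊔ B}}(x\text{ left})\simeq \code{A}$ is a routine but correct check. For the easy direction the paper is more elementary: it just exhibits \code{A ⊔ B ≃ Fin m ⊔ Fin n ≃ Fin (m + n)}, thereby also computing the cardinality, whereas you again route through $\Sigma$-closure over \code{Fin 2}. Both uses of \cref{prop:fin-closure} are legitimate as it is stated earlier in the paper, though be aware that in a formalization the closure of finiteness under dependent sums is itself typically proved by splitting \code{Σ (Fin (suc n)) P} as a binary coproduct, so the paper's direct argument avoids a dependency that your version leans on; as a paper-level proof, however, yours stands.
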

\begin{proof}
  Suppose that \code{A ⊔ B} is finite. It can be shown that a decidable subtype
  of \code{Fin n} is of the form \code{Fin k}. More precisely: given a family
  \code{P : Fin n → Type} which is a predicate (i.e. \code{P i} is a proposition
  for every \code{i} in \code{Fin n}) and decidable (i.e. we have \code{(P i) ⊔
    ¬ (P i)}), then there exists a natural number \code{k} such that \code{Σ
    (Fin n) P ≃ Fin k}. By \cref{prop:fin-equiv}, we can deduce that a decidable
  subtype of a finite type is finite. Writing \code{split : A ⊔ B → Bool} for
  the function sending the elements of \code{A} to \code{true} and the elements
  of \code{B} to \code{false}, we have that \code{A} is equivalent to the
  subtype of \code{A ⊔ B} determined by the predicate \code{P : A ⊔ B → Type}
  defined by \code{P x = split x ≡ true}, which is decidable (because \code{Bool} has
  a decidable equality). We conclude that \code{A} is finite, as a decidable
  subtype of \code{A ⊔ B}, which is supposed to be finite (and \code{B} is also
  finite for similar reasons).

  Conversely, if \code{A} and \code{B} are finite, writing \code{m} and \code{n}
  for their cardinal, we have that \code{A ≃ Fin m} and \code{B ≃ Fin n}, and
  thus \code{A ⊔ B ≃ Fin m ⊔ Fin n ≃ Fin (m + n)} (we can ignore the
  propositional truncation because we are eliminating into the proposition of
  being finite).
\end{proof}

\begin{proposition}
  \label[proposition]{prop:finite-set}
  Finite types are sets.
\end{proposition}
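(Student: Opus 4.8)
The plan is to reduce the statement to two standard facts of homotopy type theory: that each type \code{Fin n} is a set, and that the predicate \code{is-set} is a proposition which is invariant under equivalence. For the first, I would note that \code{Fin n} has decidable equality, hence is a set by Hedberg's theorem; alternatively this follows by induction on \code{n}, using the equivalence \code{Fin (suc n) ≡ ⊤ ⊔ Fin n} already exploited in the proof of \cref{lem:Fin-inj}, together with the closure of sets under coproducts and the fact that \code{⊤} is a set. For the second, I would invoke that \code{is-set A} is a proposition for every type \code{A}, and that if \code{A ≃ B} with \code{B} a set then \code{A} is a set — obtained either by transporting \code{is-set} along the identification \code{A ≡ B} coming from univalence, or directly along the equivalence.

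Granting these, the argument is immediate. Given a finite type \code{A}, its definition provides a natural number \code{n} together with an element of \code{∥ A ≃ Fin n ∥}; the cardinality \code{n} itself plays no further role. Since the goal \code{is-set A} is a proposition, the elimination rule for the propositional truncation lets us assume an honest equivalence \code{e : A ≃ Fin n}. As \code{Fin n} is a set, transporting along \code{e} shows that \code{A} is a set, which completes the proof.

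There is essentially no serious obstacle here: the content is bookkeeping around the two auxiliary observations. The only ingredient genuinely new with respect to the material already in the excerpt is the propositionality of \code{is-set}, the fact that \code{Fin n} is a set being recoverable from \cref{prop:fin-closure} since \code{Fin n} is built from \code{⊤} by iterated sums. In the intended set-theoretic reading the statement is of course a triviality, a finite type being literally a finite set.
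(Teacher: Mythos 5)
Your proposal is correct and follows essentially the same route as the paper: since being a set is a proposition, the propositional truncation in \code{is-finite} can be eliminated to obtain an actual equivalence \code{A ≃ Fin n}, and the fact that \code{Fin n} is a set (decidable equality plus Hedberg's theorem) is then transported along this equivalence. The only cosmetic difference is your optional alternative argument that \code{Fin n} is a set by induction via \code{Fin (suc n) ≡ ⊤ ⊔ Fin n}, which the paper does not need.
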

\begin{proof}
  Suppose given a finite type \code{(A , (n , F))}, where \code{A} is a type,
  \code{n} is a natural number and \code{F} a proof of \code{∥ A ≃ Fin m
    ∥}. Being a set is a proposition~\cite[Lemma 3.3.5]{hottbook}, by
  elimination of propositional truncation we can thus extract from \code{F} and
  equivalence \code{A ≃ Fin n}, and we conclude by transporting the fact that
  \code{Fin n} is a set (it is easily shown to be decidable and thus a set by
  Hedberg's theorem~\cite[Theorem~7.2.5]{hottbook}).
\end{proof}

We write \code{FinType} for the type of all finite types:
\begin{verbatim}
FinType : Type₁
FinType = Σ Type is-finite
\end{verbatim}
Note that this type is a large one, it lives in \code{Type₁}. By
\cref{prop:is-finite-is-prop}, two elements of this type are equal if and only
if their first components (i.e. underlying types) are equal:

\begin{lemma}
  Given elements \code{A} and \code{B} of \code{FinType}, we have \code{(A ≡ B)
    ≃ (fst A ≡ fst B)}.
\end{lemma}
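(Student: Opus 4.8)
The plan is to use the standard HoTT fact that equality in a $\Sigma$-type is characterized by pairs of equalities, together with the fact that the second component here lives in a proposition.

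First I would invoke the general lemma on paths in $\Sigma$-types (\cite[Theorem~2.7.2]{hottbook}): for a type family \code{P : Type → Type}, an equality \code{(A ≡ B)} in \code{Σ Type P} is equivalent to \code{Σ (p : fst A ≡ fst B) (transport P p (snd A) ≡ snd B)}. Applied with \code{P = is-finite}, this gives \code{(A ≡ B) ≃ Σ (p : fst A ≡ fst B) (transport is-finite p (snd A) ≡ snd B)}.

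Next I would observe that, by \cref{prop:is-finite-is-prop}, \code{is-finite (fst B)} is a proposition, so any two of its elements — in particular \code{transport is-finite p (snd A)} and \code{snd B} — are equal, and moreover the type of such equalities is contractible. Hence the second $\Sigma$-component above is always inhabited and contractible, so the projection \code{Σ (p : fst A ≡ fst B) (transport is-finite p (snd A) ≡ snd B) → (fst A ≡ fst B)} is an equivalence. Composing with the previous equivalence yields \code{(A ≡ B) ≃ (fst A ≡ fst B)}, as required.

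There is essentially no obstacle here: the only mild point to get right is the direction of the reasoning — one must note that being finite is a \emph{predicate} (a family of propositions), which is exactly the content of \cref{prop:is-finite-is-prop}, so that the second component contributes nothing up to equivalence. In the formalization this is simply an application of the standard ``subtype'' lemma: for a family of propositions \code{P}, the first projection \code{Σ Type P → Type} is an embedding, i.e.\ induces equivalences on identity types. I would cite \cref{prop:is-finite-is-prop} and the corresponding HoTT-book result (\cite[Theorem~2.7.2]{hottbook} on $\Sigma$-types, or the embedding characterization in \cite[Section~4.6]{hottbook}) and leave the bookkeeping to Agda.
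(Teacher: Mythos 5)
Your proof is correct and follows essentially the same route as the paper, which states the lemma as a direct consequence of \cref{prop:is-finite-is-prop} via the standard subtype argument: since \code{is-finite} is a family of propositions, paths in the $\Sigma$-type are determined by paths between first components. Your explicit use of the $\Sigma$-path characterization and the contractibility of the second component is exactly the bookkeeping the paper leaves implicit.
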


\noindent
From this, we can deduce:

\begin{proposition}
  \label[proposition]{prop:fintype-gpd}
  \code{FinType} is a groupoid.
\end{proposition}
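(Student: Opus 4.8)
The plan is to use the lemma stated immediately before the proposition to replace equalities in \code{FinType} by equalities of underlying types, and then to leverage the fact that these underlying types are sets by \cref{prop:finite-set}.

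Being a groupoid means that for all \code{A B : FinType} the identity type \code{A ≡ B} is a set, so I fix \code{A} and \code{B} and argue as follows. By the lemma just above, \code{A ≡ B} is equivalent to \code{fst A ≡ fst B}; since the property of being a set is invariant under equivalence, it suffices to show that \code{fst A ≡ fst B} is a set. By univalence this identity type of types is itself equivalent to the type of equivalences \code{fst A ≃ fst B}, so --- invoking equivalence-invariance of being a set once more --- it is enough to prove that \code{fst A ≃ fst B} is a set.

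Here \code{fst B} is a set by \cref{prop:finite-set}. An element of \code{fst A ≃ fst B} is a function \code{fst A → fst B} equipped with the propositional data witnessing that it is an equivalence; hence \code{fst A ≃ fst B} is a subtype of the function type \code{fst A → fst B}, and the first projection embeds it into that type. A (dependent) product of sets is a set, so \code{fst A → fst B} is a set, and any type admitting an embedding into a set is itself a set; therefore \code{fst A ≃ fst B} is a set, as required.

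This is essentially a chain of standard HoTT facts --- univalence, invariance of truncation levels under equivalence, products into sets being sets, and subtypes of sets being sets --- so I do not anticipate a genuine obstacle. The only point to watch is the handling of universe levels: \code{FinType} lives in \code{Type₁}, so its identity types, and the equivalence types appearing along the way, live one universe higher; but none of the ingredients used are universe-sensitive, so this bookkeeping goes through transparently.
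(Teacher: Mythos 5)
Your proof is correct and follows the same route as the paper: reduce \code{A ≡ B} to \code{fst A ≡ fst B} via the preceding lemma and then use \cref{prop:finite-set}. The only difference is that you spell out in full the standard fact that identity types between sets in the universe are sets (via univalence, propositionality of being an equivalence, and function types into sets), which the paper simply asserts.
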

\begin{proof}
  Given type finite types \code{A} and \code{B}, the type \code{A ≡ B} is
  equivalent to the type of equalities between the underlying types of \code{A}
  and \code{B}, which is a set by \cref{prop:finite-set} as the types of
  equalities between two sets.
  % \SM{donner des détails ou une réf pour cet argument}
\end{proof}

\subsection{A small axiomatization of the type finite types}
In this section, we show an important property: the type \code{FinType} of
finite types is equivalent to a small type~\cite[\texttt{Bij.agda}]{git}. This
is a generalization of the following simple observation: every finite set is
isomorphic to a set of the form $[n]=\set{0,\ldots,n-1}$ for some natural
number~$n$, so that the class of finite sets is equivalent to the set of natural
numbers. However, \code{FinType} is a groupoid and not a set, in the sense of
HoTT: there are non-trivial equalities between finite types which, by
univalence, correspond to isomorphisms of finite types. For this reason, we do
not expect that the type \code{FinType} is equivalent to the traditional type
\code{ℕ} of natural numbers, which has no non-trivial path between its element
(it has decidable equality and thus is a set by Hedberg's theorem), but rather
to a type that we call here \code{𝔹}, which has natural numbers as objects, but
is moreover such that the group of path endomorphisms on an object \code{n} is
the symmetric group on \code{n} elements. A more accurate picture of the
situation than the equivalence between finite sets and natural numbers is thus
the equivalence between the category $\category{Bij}$ of finite sets an
bijections and its skeleton, which has natural numbers as objects. Again, it is
important to remark that $\category{Bij}$ is not small (its collection of
objects does not form a set) whereas its skeleton is.

The type \code{𝔹} being constructed from constructors corresponding to the
natural numbers, but also paths, it is natural to describe it as a higher
inductive type~\cite[Section~6]{hottbook}. Those are not readily available in
current plain version of Agda, but they can be simulated by working
axiomatically with them, as done usually.
% (another possibility would have been to work with the cubical version of Agda,
% which we plan to try in a second time).
The definition we give here is close to the one performed
in~\cite{licata2014eilenberg} in order to define Eilenberg-MacLane spaces in
homotopy type theory.

\begin{definition}
  We axiomatize \code{𝔹} as the small type such that
  \begin{itemize}
  \item it has natural numbers as objects:
\begin{verbatim}
obj : ℕ → 𝔹
\end{verbatim}
  \item for every equivalence between finite sets there is a path in \code{𝔹}
    between the corresponding natural numbers:
\begin{verbatim}
hom : {m n : ℕ} (α : Fin m ≃ Fin n) → obj m ≡ obj n
\end{verbatim}
  \item the path associated to identity is the identity:
\begin{verbatim}
id-coh : (n : ℕ) → hom {n = n} ≃-refl ≡ refl
\end{verbatim}
  \item paths are compatible with composition:
\begin{verbatim}
comp-coh : {m n o : ℕ} (α : Fin m ≃ Fin n) (β : Fin n ≃ Fin o) →
           hom (≃-trans α β) ≡ hom α ∙ hom β
\end{verbatim}
  \item there are no higher paths, \ie \code{𝔹} is a groupoid:
\begin{verbatim}
𝔹-is-groupoid : is-groupoid 𝔹
\end{verbatim}
  \end{itemize}
  We also need to postulate an appropriate elimination principle, which can be
  found in the formalization: it roughly states that, in order to define
  function of type \code{f : 𝔹 → A}, for some groupoid \code{A}, it is enough to
  define:
  \begin{itemize}
  \item an element \code{f (obj n)} of \code{A} for every natural number
    \code{n},
  \item a path \code{apd f (hom e) : f (obj m) ≡ f (obj n)} for every
    equivalence \code{e : Fin m ≃ Fin n},
  \end{itemize}
  in suitably coherent way. The slightly more readable non-dependent version of
  this elimination principle is
\begin{verbatim}
𝔹-rec : (A : Type) → is-groupoid A →
         (obj* : ℕ → A) →
         (hom* : {m n : ℕ} → Fin m ≃ Fin n → obj* m ≡ obj* n) →
         (id-coh* : (n : ℕ) → hom* {n = n} ≃-refl ≡ refl) →
         (comp-coh* : {m n o : ℕ} (α : Fin m ≃ Fin n) (β : Fin n ≃ Fin o) →
                      hom* (≃-trans α β) ≡ hom* α ∙ hom* β) →
         𝔹 → A
\end{verbatim}
  Finally, we also need to postulate two computation rules (implemented as
  rewriting rules). For simplicity we indicate the non-dependent versions
  here. Given a function \code{f : 𝔹 → A} obtained by applying \code{𝔹-rec} to
  some arguments with the above notations, we have
  \begin{itemize}
  \item \code{f (obj n) = obj* n} for any natural number \code{n},
  \item \code{ap f (hom α) = hom* α} for any equivalence \code{α : Fin m ≃ Fin n}.
  \end{itemize}
\end{definition}

\begin{remark}
  In the definition of \code{hom}, the reader might be surprised that we allow
  two different cardinalities of finite sets: given an equivalence \code{Fin m ≃
    Fin n}, we necessarily have \code{m ≡ n}. But this ``more general''
  definition simplifies the proofs in practice.
\end{remark}

There is a canonical function \code{𝔹-to-Fin : 𝔹 → Type} which realizes the
elements of 𝔹 as finite types. More precisely, we define the function
\code{𝔹-to-FinType : 𝔹 → FinType} by using the elimination principle
\code{𝔹-rec} with the following arguments:
\begin{itemize}
\item \code{A} is the type \code{FinType} of finite types, which is a groupoid
  by \cref{prop:fintype-gpd},
\item \code{obj*} states that \code{𝔹-to-FinType (obj n) = Fin n} (through the
  first computation rule),
\item \code{hom*} states that, for an equivalence \code{α : Fin m ≃ Fin n}, we
  have \code{ap 𝔹-to-FinType (hom α)} is the equality \code{Fin m ≡ Fin n}
  obtained from \code{α} by univalence (through the second computation rule).
\end{itemize}
The function \code{𝔹-to-Fin} can then be obtained by post-composing
\code{𝔹-to-FinType} with the first projection, i.e. we forget about the proofs
of finiteness.

One of the main contributions of this paper is the following theorem:

\begin{theorem}
  \label{thm:fin-B}
  The large type of finite types and the above type are equivalent:
  \code{FinType ≃ 𝔹}.
\end{theorem}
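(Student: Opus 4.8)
The plan is to prove that the map \code{𝔹-to-FinType : 𝔹 → FinType} constructed just above is an equivalence; \code{≃}-symmetry then gives \code{FinType ≃ 𝔹}. For this I would invoke the standard fact that a function between types is an equivalence as soon as it is an \emph{embedding} (its action on paths \code{ap} induces an equivalence on every identity type) and \emph{surjective} (every fibre is merely inhabited): an embedding has propositional fibres, a surjection has merely inhabited ones, and a proposition that is merely inhabited is contractible, so every fibre is contractible. It then remains to verify these two conditions for \code{𝔹-to-FinType}.

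Surjectivity is the routine half. Given a finite type \code{(A , (n , F))} with \code{F : ∥ A ≃ Fin n ∥}, we must show it is merely of the form \code{𝔹-to-FinType b}; as this goal is a proposition we may eliminate the truncation in \code{F} to get an equivalence \code{A ≃ Fin n}. Taking \code{b := obj n}, the first computation rule identifies the underlying type of \code{𝔹-to-FinType (obj n)} with \code{Fin n}, so by the lemma stating that an equality of finite types is the same as an equality of their underlying types, the required identification \code{𝔹-to-FinType (obj n) ≡ (A , (n , F))} is produced from \code{A ≃ Fin n} by univalence.

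For the embedding condition, since being an equivalence is a proposition, the elimination principle for \code{𝔹} reduces us to the case where both endpoints are of the form \code{obj m} and \code{obj n}, so that we must show \code{ap 𝔹-to-FinType : (obj m ≡ obj n) → (𝔹-to-FinType (obj m) ≡ 𝔹-to-FinType (obj n))} is an equivalence. Using again the lemma on equalities of finite types together with univalence, the codomain is equivalent to \code{Fin m ≃ Fin n}, and the second computation rule says that \code{ap 𝔹-to-FinType (hom α)} is carried to \code{α} under this equivalence. It is therefore enough to exhibit an equivalence between \code{obj m ≡ obj n} and \code{Fin m ≃ Fin n} sending \code{hom α} to \code{α}. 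I would build it by an encode--decode argument in the spirit of~\cite{licata2014eilenberg}: let \code{Code a b} be the type \code{𝔹-to-Fin a ≃ 𝔹-to-Fin b}, let \code{encode} transport \code{≃-refl} along the given path, and let \code{decode} send (at objects) an equivalence \code{α} to \code{hom α}. Then \code{decode (encode refl) ≡ refl} holds by \code{id-coh}; \code{encode (decode α) ≡ α} holds because transporting \code{≃-refl} along \code{hom α} returns \code{α}, again by the second computation rule; and the fact that \code{decode} is well defined over the path constructor of \code{𝔹} is precisely \code{comp-coh}.

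The step I expect to be the real obstacle is this last one: making \code{decode} respect the path and coherence constructors of \code{𝔹} requires matching the transport in the family \code{Code} — which amounts to pre- and post-composition of equivalences — against concatenation \code{hom α ∙ hom β}, and threading \code{id-coh} and \code{comp-coh} (together with the derived fact that \code{hom} sends inverses to inverses) through the two nested eliminations. This is exactly the kind of loop-space bookkeeping carried out in the computation of the fundamental groups of Eilenberg--MacLane spaces; once it is settled, assembling the equivalence and concluding that \code{𝔹-to-FinType} is an embedding, hence an equivalence, is immediate.
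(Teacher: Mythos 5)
Your proposal follows essentially the same route as the paper: show that \code{𝔹-to-FinType} is surjective (by cardinality, eliminating the truncation into a propositional goal) and an embedding via an encode--decode argument identifying \code{obj m ≡ obj n} with \code{Fin m ≃ Fin n} using the computation rules, \code{id-coh} and \code{comp-coh}, then conclude by the standard fact that a surjective embedding is an equivalence. The only cosmetic difference is that you define \code{Code a b} directly as \code{𝔹-to-Fin a ≃ 𝔹-to-Fin b} rather than by \code{𝔹-rec} into the groupoid of sets, which amounts to the same family by the computation rules.
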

\begin{proof}
  The proof uses the ``encode-decode method'' introduced to compute the
  fundamental group of the circle~\cite[Section~8.1]{hottbook}. Given a natural
  number \code{n} and an element \code{b} of~\code{𝔹}, we can encode the type of
  paths of type \code{obj n ≡ b} in \code{𝔹} as the type \code{Code n b} defined
  by induction on \code{b}. More precisely, we define the function \code{Code
    n : 𝔹 → Set} using \code{𝔹-rec} with the following arguments:
  \begin{itemize}
  \item \code{A} is the type \code{Set} of all sets, which is a groupoid,
  \item \code{obj*} states that \code{Code n (obj m)} should be the type
    \code{Fin n ≃ Fin m} of equivalences between \code{Fin m} and \code{Fin n},
  \item \code{hom* α} states, given an equivalence \code{α : Fin m ≃ Fin m'},
    that \code{ap (Code n) α : (Fin n ≃ Fin m) ≡ (Fin n ≃ Fin m')} should be the
    equality between equivalences induced by post-composition with \code{α}.
  \end{itemize}
  Then, we can define an ``encoding'' function \code{e : obj n ≡ b → Code n b}
  such that \code{e refl = ≃-refl}, and a ``decoding'' function \code{d : Code n
    b → obj n ≡ b} by a suitable induction on \code{b}. It can be shown that
  \code{d ∘ e} is the identity and that, for \code{α : Fin m ≃ Fin n}, we have
  \code{e (d (hom α)) ≡ α}.
  This can be used to deduce that the function \code{ap 𝔹-to-FinType : obj m ≡
    obj n → Fin m ≡ Fin n} is an equivalence, a thus that the function
  \code{𝔹-to-FinType} is an embedding (by induction).
  The function \code{𝔹-to-FinType} is easily shown to be surjective: any finite
  type \code{A} is equal to \code{𝔹-to-FinType (obj n)} where \code{n} is the
  cardinality of \code{A}. We finally deduce that the function is an
  equivalence since it is both an embedding and surjective,
  see~\cite[Theorem~4.6.3]{hottbook}.
\end{proof}

% \SM{le reviewer 1 propose une construction alternative de ce type fondée sur les
  % propriétés de Fin, mais je n'ai pas réussi à remplir les détails...}

\section{A cartesian closed bicategory of finitary polynomials}
\label{sec:ccc}
\subsection{Finitary polynomials}
\label{sec:agda-fin-pol}
Thanks to the notion of finiteness for types, we define
in~\cite[\texttt{FinPolynomial.agda}]{git} finitary polynomials following the
explanations of \cref{sec:intro-closure}.

\begin{definition}
  A polynomial is \emph{finitary} when, for each operation \code{c}, the total
  space of its parameters is finite:
\begin{verbatim}
is-finitary : (P : I ↝ J) → Type
is-finitary P = {j : J} (c : Op P j) → is-finite (Σ I (λ i → Pm P i c))
\end{verbatim}
\end{definition}

\begin{remark}
  Another definition of being finitary could be to require that each space of
  parameters is finite:
\begin{verbatim}
is-finitary' P = (i : I) {j : J} (c : Op P j) → is-finite (Pm P i c)
\end{verbatim}
  but one quickly convinces oneself that this notion is not suitable: identity
  polynomials are not generally finitary in this sense, and being finitary is
  not stable under composition.
\end{remark}

\begin{proposition}
  \label[proposition]{prop:fpol-bicat}
  The cartesian bicategory of \cref{thm:poly-bicat} restricts to a bicategory
  whose morphisms are finitary polynomials.
\end{proposition}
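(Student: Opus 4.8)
The plan is to exploit that \code{is-finitary} is a proposition: it is a dependent product of the types \code{is-finite (Σ I (λ i → Pm P i c))}, each of which is a proposition by \cref{prop:is-finite-is-prop}. Hence, for any groupoids \code{I} and \code{J}, the type \code{Σ (I ↝ J) is-finitary} of finitary polynomials sits inside \code{I ↝ J} as a subtype — its first projection is an embedding — and, since a subtype of a groupoid by a family of propositions is again a groupoid, the restricted hom-types are again groupoids. Consequently, to restrict the bicategory of \cref{thm:poly-bicat} it suffices to show that (i) the identity polynomial is finitary and (ii) the composite of two finitary polynomials is finitary; the whole equational structure of \cref{def:agda-prebicat} — the unitors \code{unit-l} and \code{unit-r}, the associator \code{assoc}, and the pentagon and triangle coherences — then transports along the embedding for free, since an equality in the subtype coincides with an equality between the underlying polynomials. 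To promote the restricted prebicategory to a genuine bicategory, one observes that the forgetful map from its type of internal equivalences to \code{I ≃' J} is again an embedding (it discards only the propositional finitariness witnesses of \code{P}, \code{Q} and \code{Q'}), and that the canonical map \code{(I ≡ J) → (I ≃' J)} factors through it: transporting the (finitary) identity polynomial and its (finitary) inverse along an equality of objects produces a finitary internal equivalence. Since this composite is an equivalence by \cref{thm:poly-bicat} and the forgetful map is an embedding, both factors are equivalences, so the restricted prebicategory satisfies the bicategory condition.

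For (i), the identity polynomial on \code{I} has, at each colour \code{j}, a single operation whose total space of parameters is \code{Σ I (λ i → i ≡ j)}, which is contractible and hence finite. For (ii), let \code{P : I ↝ J} and \code{Q : J ↝ K} be finitary; an operation of \code{P · Q} at a colour \code{k} is a pair \code{(c , a)} with \code{c : Op Q k} and \code{a} a family of operations of \code{P}, and unfolding the definitions its total space of parameters is \code{Σ I (λ i → Σ J (λ j → Σ (Pm Q j c) (λ p → Pm P i (a j p))))}. Commuting and reassociating the $\Sigma$-types exhibits this as a sum indexed by \code{Σ J (λ j → Pm Q j c)} — the total parameter space of the operation \code{c} of \code{Q}, finite since \code{Q} is finitary — whose fibre over a pair \code{(j , p)} is \code{Σ I (λ i → Pm P i (a j p))} — the total parameter space of the operation \code{a j p} of \code{P}, finite since \code{P} is finitary. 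By closure of finite types under dependent sums (\cref{prop:fin-closure}) this sum is finite, and by invariance of finiteness under equivalence (\cref{prop:fin-equiv}) so is the original total space.

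For the cartesian structure, we check that the operations \code{projl} and \code{pair} used to define it stay inside the finitary fragment. The unique operation of \code{projl : (I ⊔ J) ↝ I} at a colour \code{i} has total parameter space \code{(Σ I (λ i' → i' ≡ i)) ⊔ (Σ J (λ j → ⊥))}, which is equivalent to a contractible type and hence finite (by \cref{prop:fin-union} and \cref{prop:fin-closure}); \code{projr} is symmetric. The operations of \code{pair P Q} at \code{inl j} (\resp \code{inr k}) are exactly those of \code{P} at \code{j} (\resp \code{Q} at \code{k}), with the same parameters, so \code{pair P Q} is finitary whenever \code{P} and \code{Q} are. The terminal object is the empty groupoid, into which there is a unique polynomial, vacuously finitary, so the identity \code{hom I T ≡ ⊤} restricts. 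Finally, post-composing a finitary polynomial \code{I ↝ (J ⊔ K)} with the (finitary) projections yields finitary polynomials by (ii), so the canonical comparison map \code{hom I (J ⊔ K) → hom I J × hom I K} restricts to the finitary hom-types, with \code{pair} restricting to its inverse; it therefore remains an equivalence, and the restricted bicategory is still cartesian.

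The only genuinely non-trivial step is the closure under composition in (ii); everything else is either a short contractibility computation or is forced by \code{is-finitary} being a proposition. The main obstacle is thus the $\Sigma$-reassociation identifying the total parameter space of a composite operation with a sum over the parameters of \code{Q} indexed fibrewise by the parameters of \code{P}: in Agda this is a routine chain of equivalences, but one must take care to align the two total parameter spaces correctly before invoking \cref{prop:fin-closure} and \cref{prop:fin-equiv}.
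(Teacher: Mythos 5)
Your proposal is correct and follows essentially the same route as the paper: the paper's (one-line) proof simply asserts stability of finitary polynomials under composition and the product structure via the closure properties of finite types (\cref{prop:fin-closure}), which is exactly what you verify, only spelled out in full (the $\Sigma$-reassociation for composites, the contractible parameter spaces for identities and projections, and the propositionality of \code{is-finitary} handling the coherences and the bicategory condition). Your extra care about the internal-equivalence condition and the restriction of the cartesian comparison map goes beyond what the paper states explicitly, but is consistent with it.
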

\begin{proof}
  Finitary polynomials can be shown to be stable under the required operations
  (composition, product) using the closure of finite type under the operations
  of~\cref{prop:fin-closure}.
\end{proof}

\noindent
By using similar arguments as in \cref{sec:closure}, the proof of
\cref{sec:agda-naive-closure} can be refined to show that, ignoring size issues,
the cartesian bicategory of finitary polynomials in groupoids admits
\code{Exp I × J} as internal hom, where \code{Exp} is now defined as
\begin{verbatim}
Exp : Type → Type₁
Exp I = Σ (I → Type) (λ F → is-finite (Σ I F))
\end{verbatim}
\ie we restrict to families whose total space is finite. Moreover, through the
equivalence between families and slices (see \cref{thm:agda-slice-fam}), it can
be shown that the above type is equivalent to the one of finite sets
over~\code{I}:

\begin{lemma}
  \label[lemma]{lem:exp-fintype}
  The above definition of the exponential is equivalent to the following one:
\begin{verbatim}
Exp : Type → Type₁
Exp I = Σ FinType (λ N → fst N → I)
\end{verbatim}  
\end{lemma}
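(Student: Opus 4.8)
The plan is to obtain this equivalence by combining the family/slice correspondence of \cref{thm:agda-slice-fam} with two routine $\Sigma$-type manipulations. The key observation is that the equivalence $\code{(I → Type)} \simeq \code{(Σ Type (λ A → A → I))}$ of \cref{thm:agda-slice-fam} sends a family \code{F} to the pair formed by its total space \code{Σ I F} and the first projection; in other words, under this equivalence the \emph{total space} of a family corresponds exactly to the \emph{underlying type} of the associated slice. Consequently the finiteness condition \code{is-finite (Σ I F)} occurring in the first definition of \code{Exp I} becomes precisely \code{is-finite} applied to the first component of a slice.

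Concretely, I would proceed in two steps. First, I would apply $\Sigma$-congruence along the equivalence of \cref{thm:agda-slice-fam}: since for every \code{F} the predicate \code{is-finite (Σ I F)} is (definitionally) \code{is-finite} of the first component of the image of \code{F}, one obtains
\begin{align*}
  &\code{Σ (I → Type) (λ F → is-finite (Σ I F))}\\
  &\qquad\simeq\ \code{Σ (Σ Type (λ A → A → I)) (λ p → is-finite (fst p))} .
\end{align*}
Second, I would reassociate and reorder the right-hand side. Using associativity of $\Sigma$ and unfolding $\code{FinType} = \code{Σ Type is-finite}$, the target $\code{Σ FinType (λ N → fst N → I)}$ becomes an iterated sum over \code{A : Type} of \code{is-finite A} and \code{A → I}, whereas the right-hand side above becomes an iterated sum over \code{A : Type} of \code{A → I} and \code{is-finite A}. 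For each fixed \code{A} these inner types are plain binary products $\code{(A → I)} \times \code{is-finite A}$ and $\code{is-finite A} \times \code{(A → I)}$ — neither factor depends on the other — hence equivalent by commutativity of $\times$; applying $\Sigma$-congruence over \code{A : Type} concludes. Composing the two equivalences gives the statement. If one additionally wants the two definitions to be literally equal, one invokes univalence; and \cref{prop:is-finite-is-prop} ensures that no coherence issue can arise in the finiteness component in any case.

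I do not expect a genuine obstacle here: the argument is essentially bookkeeping. The one point that needs a little care — and the closest thing to a pitfall — is orienting \cref{thm:agda-slice-fam} so that the total space of a family is carried to the underlying type of the associated slice \emph{on the nose}, which makes the two finiteness predicates coincide without an intervening transport; everything downstream is then just associativity of $\Sigma$ and commutativity of $\times$.
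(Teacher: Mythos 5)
Your proposal is correct and follows essentially the same route the paper indicates: the lemma is justified via the families/slices equivalence of \cref{thm:agda-slice-fam} (which carries a family to its total space with the projection), after which the finiteness predicates match up and the rest is $\Sigma$-reassociation. No gap here.
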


% For the same reasons as in \cref{sec:agda-naive-closure}, the above reasoning
% requires us to ignore size issues, which forces us working in an inconsistent
% logic. Again the cause of the problem is the fact that, given a small type
% \code{A}, \code{Exp A} is a large type. If we consider the definition of
% \cref{lem:exp-fintype} for the exponential, this is due to the fact that we
% sum over finite types, and \code{FinType} is itself a large type. We see
% however in next section that it is in fact equivalent to a small type, so that
% the above proof can be modified in order to properly handle size issues.

\noindent
In turn, by \cref{thm:fin-B}, we have that this type is equivalent to the one of
elements of \code{𝔹} over \code{I}:

\begin{lemma}
  The above definition of the exponential is equivalent to the following one:
\begin{verbatim}
Exp : Type → Type
Exp I = Σ 𝔹 (λ b → 𝔹-to-Fin b → I)
\end{verbatim}
\end{lemma}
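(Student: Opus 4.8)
The plan is to obtain the claimed equivalence by reindexing the outer $\Sigma$-type along the equivalence $\code{FinType ≃ 𝔹}$ of \cref{thm:fin-B}, observing that the two fibrations involved agree definitionally once this reindexing is performed, so that no transport of the second component is needed.

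Concretely, I would first invoke the standard ``base change'' equivalence for dependent sums: given an equivalence $\code{f : A ≃ B}$ and a family $\code{P : B → Type}$, precomposition yields an equivalence $\code{Σ A (λ a → P (f a)) ≃ Σ B P}$, proved by transporting along $\code{f}$ in the first coordinate and keeping the second coordinate fixed. I instantiate this with $\code{A := 𝔹}$, $\code{B := FinType}$, $\code{f := 𝔹-to-FinType}$ (an equivalence by \cref{thm:fin-B}), and $\code{P := (λ N → fst N → I)}$. This gives an equivalence between $\code{Σ 𝔹 (λ b → fst (𝔹-to-FinType b) → I)}$ and $\code{Σ FinType (λ N → fst N → I)}$, the latter being exactly the \code{FinType}-version of \code{Exp I} from \cref{lem:exp-fintype}. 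It then only remains to identify the left-hand family with $\code{λ b → 𝔹-to-Fin b → I}$, which holds on the nose: by the definition of \code{𝔹-to-Fin} given just before \cref{thm:fin-B} as the postcomposition of \code{𝔹-to-FinType} with the first projection, we have $\code{𝔹-to-Fin b}$ is literally $\code{fst (𝔹-to-FinType b)}$.

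There is no genuine obstacle here; the statement is a one-line consequence of closure of equivalences under $\Sigma$ together with \cref{thm:fin-B}. The only point deserving a moment's care is that the equivalence $\code{FinType ≃ 𝔹}$ must be taken to be the concrete map \code{𝔹-to-FinType} (rather than an opaque inverse), so that the fibre identification $\code{fst ∘ 𝔹-to-FinType ≡ 𝔹-to-Fin}$ is available definitionally and the second-component family can be carried along silently; this is precisely how \cref{thm:fin-B} is set up, \code{𝔹-to-FinType} being shown to be both an embedding and surjective. In the formalization this amounts to a single application of the $\Sigma$-congruence lemma for equivalences.
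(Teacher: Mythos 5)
Your proposal is correct and follows essentially the same route as the paper, which offers no separate proof for this lemma beyond invoking \cref{thm:fin-B}: the equivalence is obtained by reindexing the outer $\Sigma$ along the equivalence \code{𝔹-to-FinType}, with the fibres matching definitionally since \code{𝔹-to-Fin} is defined as \code{fst} composed with \code{𝔹-to-FinType}. Your spelled-out version, including the remark that one should use the concrete map from \cref{thm:fin-B} rather than an opaque inverse, is a faithful elaboration of what the paper leaves implicit.
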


\noindent
This is now small type, and thus a reasonable candidate for the right
definition of the exponential.

% The proof of \cref{sec:agda-fin-pol} can be adapted to this definition
% (because it is equivalent to the one of \cref{lem:exp-fintype} by
% \cref{thm:fin-B}), resulting in a proof which does not require inconsistent
% assumptions on universes.

\subsection{A cartesian closed bicategory}
\label{sec:closure}
We can finally show that the cartesian bicategory of finitary polynomials
(\cref{prop:fpol-bicat}) is closed. Before doing so, we first prove a useful
lemma:

\begin{lemma}
  \label[lemma]{lem:fin-fam-proj}
  Given types \code{I} and \code{J} and a family \code{A : (I ⊔ J) → Type}, we
  have that the total type \code{Σ (I ⊔ J) A} is finite if and only if both
  types \code{Σ I (A ∘ inl)} and \code{Σ J (A ∘ inr)} are finite.
\end{lemma}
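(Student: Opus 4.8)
The plan is to reduce the statement to a purely type-theoretic fact about finiteness of total spaces over a coproduct, which then follows from the closure properties of finite types recorded in \cref{prop:fin-closure} and \cref{prop:fin-union}. The key observation is that for a family \code{A : (I ⊔ J) → Type}, the total space \code{Σ (I ⊔ J) A} splits canonically: there is an equivalence \code{Σ (I ⊔ J) A ≃ (Σ I (A ∘ inl)) ⊔ (Σ J (A ∘ inr))}. This is essentially the distributivity of \code{Σ} over \code{⊔}, and it is proved by pattern matching on the first component (an element of \code{I ⊔ J} is either \code{inl i} or \code{inr j}), sending \code{(inl i , a)} to \code{inl (i , a)} and \code{(inr j , a)} to \code{inr (j , a)}; the inverse is defined symmetrically, and both round-trips are the identity definitionally (or after a trivial \code{Σ}-extensionality step).

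Once this equivalence is in hand, the lemma is immediate: by \cref{prop:fin-equiv}, \code{Σ (I ⊔ J) A} is finite if and only if \code{(Σ I (A ∘ inl)) ⊔ (Σ J (A ∘ inr))} is finite; and by \cref{prop:fin-union}, the latter coproduct is finite if and only if both \code{Σ I (A ∘ inl)} and \code{Σ J (A ∘ inr)} are finite. Chaining these two biimplications gives exactly the statement of \cref{lem:fin-fam-proj}.

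I would organize the proof as: (i) construct the distributivity equivalence \code{Σ (I ⊔ J) A ≃ (Σ I (A ∘ inl)) ⊔ (Σ J (A ∘ inr))} by pattern matching; (ii) invoke \cref{prop:fin-equiv} to transport finiteness across it; (iii) invoke \cref{prop:fin-union} to split the coproduct. The only mildly delicate point — and the one I expect to be the main (though still minor) obstacle — is making the definitional behaviour of the pattern matching on \code{inl}/\code{inr} interact cleanly with the dependent second component \code{a : A (inl i)} versus \code{a : A (A ∘ inl) i}, since these are the same type up to the definitional unfolding of composition; in Agda this typically just works, but stating the equivalence so that Agda accepts the round-trip equations without extra coercions requires a little care. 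No genuinely hard step is involved: this lemma is a packaging result, assembled from the finite-type infrastructure already established in \cref{sec:fintype}.
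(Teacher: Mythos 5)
Your proof is correct and follows exactly the paper's own argument: establish the equivalence \code{Σ (I ⊔ J) A ≃ (Σ I (A ∘ inl)) ⊔ (Σ J (A ∘ inr))} by case analysis on the first component, then transport finiteness along it via \cref{prop:fin-equiv} and split the coproduct with \cref{prop:fin-union}. Nothing further is needed.
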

\begin{proof}
  We have that \code{Σ (I ⊔ J) A} is equivalent to \code{(Σ I (A ∘ inl)) ⊔ (Σ J
    (A ∘ inr))} and we conclude by \cref{prop:fin-equiv} and
  \cref{prop:fin-union}.
\end{proof}

As suggested in previous section, we define the exponential as
\begin{verbatim}
Exp : Type → Type
Exp I = Σ 𝔹 (λ b → 𝔹-to-Fin b → I)
\end{verbatim}
and finally show our main theorem, by adapting the naive constructions of
\cref{sec:agda-naive-closure} \cite[\texttt{FinPolynomial.agda},
\texttt{CurryUncurry.agda}, \texttt{UncurryCurry.agda}]{git}:

\begin{theorem}
  The cartesian bicategory of finitary polynomials is closed, with \code{Exp I ×
    J} as internal hom from \code{I} to \code{J}.
\end{theorem}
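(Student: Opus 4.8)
The plan is to follow the same two-sided strategy as the naive case (the theorem in \cref{sec:agda-naive-closure}), but carried out in the finitary setting with \code{Exp I = Σ 𝔹 (λ b → 𝔹-to-Fin b → I)}. First I would observe that, by \cref{lem:exp-fintype} and the subsequent lemma, this small \code{Exp I} is equivalent (hence equal, by univalence) to \code{Σ FinType (λ N → fst N → I)}, which in turn is equivalent to \code{Σ (I → Type) (λ F → is-finite (Σ I F))}; so for the purpose of constructing the currying/uncurrying maps and checking they are mutually inverse, I may work with whichever of these three presentations is most convenient, and transport the result along the equivalences at the end. Concretely I would define \code{curry : (is-finitary ((I ⊔ J) ↝ K)) → (is-finitary (I ↝ (Exp J × K)))} and \code{uncurry} by essentially the same formulas as in \cref{sec:agda-naive-closure}, but now (i) restricted to finitary polynomials and (ii) with the family \code{(λ j → Pm P (inr j) c)} replaced by the corresponding element of \code{Exp J}, packaging it together with the proof that its total space is finite.

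The first substantive point is well-definedness: I must check that \code{curry} and \code{uncurry} actually land in finitary polynomials. For \code{curry P} this uses \cref{lem:fin-fam-proj}: finitarity of \code{P : (I ⊔ J) ↝ K} says \code{Σ (I ⊔ J) (λ x → Pm P x c)} is finite for each operation \code{c}, and the lemma splits this into finiteness of \code{Σ I (λ i → Pm P (inl i) c)} — which is exactly the parameter-space of \code{curry P} at \code{c} — and finiteness of \code{Σ J (λ j → Pm P (inr j) c)}, which is precisely what lets us form the element of \code{Exp J} appearing in the new operation. Conversely, for \code{uncurry P} one reassembles the two pieces of finiteness using the same lemma in the other direction, together with \cref{prop:fin-closure} for the \code{Σ} over \code{Exp J}. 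Once the maps are known to preserve finitarity, I would establish \code{uncurry (curry P) ≃₂ P} and \code{curry (uncurry P) ≃₂ P} as equivalences of polynomials — the operation components are a shuffling of \code{Σ}-types witnessed by the universal property of \code{Σ} (and the fact that pairing a family with the \emph{property} of finiteness adds no data, by \cref{prop:is-finite-is-prop}), and the parameter components are already equivalences by construction. By \cref{thm:pol-ua} these turn into equalities, so \code{curry} and \code{uncurry} assemble into an equivalence, and by univalence we get \code{((I ⊔ J) ↝ K) ≡ (I ↝ (Exp J × K))} restricted to finitary polynomials, naturally in all three objects; this is exactly the statement that \code{Exp I × J} is an internal hom, and \cref{prop:fpol-bicat} supplies the ambient cartesian bicategory structure.

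I expect the main obstacle to be bookkeeping around the three equivalent presentations of \code{Exp}: the cleanest formulas for \code{curry}/\code{uncurry} use the \code{(I → Type)} presentation, where an operation of \code{curry P} is a pair of an operation \code{c} of \code{P} with a proof that \code{(λ j → Pm P (inr j) c)} equals a chosen family, but the \emph{small} statement we want is phrased with \code{𝔹}; reconciling these requires transporting the currying adjunction along \code{FinType ≃ 𝔹} (\cref{thm:fin-B}) and along \cref{thm:agda-slice-fam}, and making sure the finiteness side conditions match up under these transports rather than merely the underlying types. A secondary, more routine difficulty is that the equivalences of polynomials involve non-trivial manipulation of dependent sums and of the path algebra coming from \code{𝔹}'s \code{comp-coh}; this is where the Agda formalization (\code{CurryUncurry.agda}, \code{UncurryCurry.agda}) does the real work, but conceptually it is just associativity/commutativity of \code{Σ} up to homotopy, so no genuinely new idea beyond the naive case is needed — the finitary hypothesis, channelled through \cref{lem:fin-fam-proj}, is precisely the ingredient that makes the small \code{Exp} both well-defined and adjoint.
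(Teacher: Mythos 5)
Your proposal follows essentially the same route as the paper: adapt the naive \code{curry}/\code{uncurry} of \cref{sec:agda-naive-closure}, use \cref{lem:fin-fam-proj} both to package the output family as an element of \code{Exp J} (via \cref{thm:fin-B}, i.e.\ the inverse \code{card : FinType → 𝔹}) and to verify finitarity, show the two maps are mutually inverse up to \code{≃₂} and conclude by \cref{thm:pol-ua} and univalence, with naturality checked at the end. The only cosmetic difference is that you prefer to work in an equivalent presentation of \code{Exp} and transport along the equivalences afterwards, whereas the paper applies \code{card} directly; the substance is the same.
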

\begin{proof}
  We can define a function
\begin{verbatim}
curry : ((I ⊔ J) ↝ K) → (I ↝ (Exp J × K))
\end{verbatim}
  essentially as in \cref{sec:agda-naive-closure} except that we have to turn
  the output \code{jj : J → Type} of the currying of an operation \code{c}
  into an element of \code{Exp J}. By \cref{lem:fin-fam-proj}, \code{jj} is a
  finite type, and thus induces an element of \code{𝔹} by applying the canonical
  function \code{card : FinType → 𝔹} given \cref{thm:fin-B} which can serve as
  first component in \code{Exp J}, and the second component can be deduced from
  the proof that \code{card} is an inverse to \code{𝔹-to-FinType}. The fact that
  the resulting polynomial is finite follows from
  \cref{lem:fin-fam-proj}. Similarly, we can define a function
\begin{verbatim}
uncurry : (I ↝ (Exp J × K)) → ((I ⊔ J) ↝ K)
\end{verbatim}
  which is again given as in \cref{sec:agda-naive-closure}, using
  \cref{thm:fin-B} to use elements of the exponential and
  \cref{lem:fin-fam-proj} to prove the required finiteness of types. The two
  functions can be shown to be mutually inverse. Finally, the bijection should
  be checked to be natural (this last part is not fully formalized yet).
\end{proof}

%%% Local Variables:
%%% mode: latex
%%% TeX-master: "article"
%%% End:

\section{Future work}
In this work, we have constructed a cartesian closed bicategory of finitary
polynomial functors in groupoids, in the setting of type theory. A natural next
step is to translate these constructions in to the traditional set-theoretic
setting, in order to ease the comparison with more traditional approaches to
polynomial functors. As mentioned in \cref{sec:set-pol}, we do not expect this
task to be particularly easy.

A second aspect in which we wish to push investigations on this bicategory is its
relation with linear logic. Namely, the bicategory of spans in groupoids can be
understood as the full subbicategory of polynomials of the form~\eqref{eq:poly}
where the morphism~$p$ is the identity. This subbicategory is monoidal with the
tensor induced on objects by cartesian product of groupoids. The inclusion
functor admits a left adjoint which is a strong lax monoidal functor. This
allows to model the exponential of linear logic~\cite{mellies2009categorical}
and will be detailed elsewhere. We also expect that the constructions of
differential linear logic can be interpreted in the model.

Finally, our model is close to the one of generalized species, which is also a
model of (differential) linear
logic~\cite{fiore2004generalised,fiore2008cartesian} inspired by combinatorial
species \cite{joyal1986foncteurs}. We would like to understand the relationship
between those two models, which is hinted at
in~\cite[Section~3.9]{kock2012data}: one of the main difference is that whereas
generalized species are be composed using traditional composition of
profunctors, which involves a quotient, homotopy polynomial functors perform a
homotopy quotient, and we should be able to obtain the first from the second by
suitably discarding homotopical information (\ie ``taking $\pi_0$'').

On the long term, we finally want to investigate the generalization to
polynomials in $\infty$-groupoids, as first studied
in~\cite{gepner2017infty}. This would make more transparent the comparison with
the type theoretic formalization, but we expect the study of this model to be
much more involved on a technical level. The definitions given here should work
identically without the hypothesis that types are groupoids, and the resulting
polynomials should organize as an $\infty$-category instead of a bicategory, but
there is currently no known definition of $\infty$-categories inside type
theory, preventing us from formally showing this result for now.

\bibliographystyle{eptcs}
\bibliography{papers}
\end{document}